\documentclass[11pt,a4paper]{article}

% Use the postscript times font!
\usepackage{times}
\usepackage{soul}
\usepackage{url}
\usepackage[hidelinks]{hyperref}
\usepackage[utf8]{inputenc}
\usepackage[small]{caption}
\usepackage{graphicx}
\usepackage{amsmath}
\usepackage{amsthm}
\usepackage{booktabs}
\usepackage{algorithm}
\usepackage{algorithmic}
\urlstyle{same}

\title{Local Aggregation in Preference Games}% with Local Aggregation}

\author{
Angelo Fanelli\thanks{CNRS, (UMR-6211), France. Email: \texttt{angelo.fanelli@unicaen.fr}.}
\and
Dimitris Fotakis\thanks{
%School of Electrical and Computer Engineering, 
National Technical University of Athens, Greece. Email: \texttt{fotakis@cs.ntua.gr}.} 
}

\date{}

%___________________________BEGIN_____________________________%

%%%%%%%%%_MY_PACKAGE_%%%%%%%%%

\usepackage{fancybox}
\usepackage{fullpage}
\usepackage{color}
\usepackage{hyperref}
\usepackage{amssymb, amsthm, mathtools}
\usepackage{amsmath}
\usepackage{array}
\usepackage{dutchcal}
\usepackage{todonotes}

%%%%%%%%%%%%%%%%%%%%%_MY_COMMANDS_%%%%%%%%%%%%%%%%%%%%%%%%%%

%%%%%%%%%%%%%%%%%%%%%_MY_DEFINITIONS_%%%%%%%%%%%%%%%%%%%%%%%%%%

\def\aggregation{\mathrm{aggr}} %aggregation function

\def\Fmedian{\mathrm{med}} % Fmedian aggregation

\def\Fmean{\mathrm{mean}} %  Fmean aggregation
%\def\F{F} % Frechet aggregation set

 %non-negative reals
 %positive reals
 % reals
%\def\Z{{\mathbb{Z}}} % integers

\def\zz{\vec{z}} %generic state

\def\xx{\vec{x}} %other generic state
\def\x{x} %other generic strategy

\def\yy{\vec{y}} %other generic state
\def\y{y} %other generic strategy

\def\ee{\vec{e}} %equilibrium
\def\e{e}

\def\oo{\vec{o}} %optimum

\def\ss{\vec{s}} %internal opinions
\def\s{s}

 %center

\def\d{d} %distance between strategies

\def\D{D} % set of players with different strategies in the equilibrium and the optimum 

 % Minkowski distance

\def\ds{\pi} %distance between states

 %weight matrix
\def\w{w} %weight

\def\STUBB{\alpha} %stubbornness factor

\def\SI{\delta} % social impact

 % stretch for SUM
 % stretch for Max
\def\STR{\tau} % stretch

\def\BOU{\beta} % dominance

\def\N{N} % set of players
\def\U{U} % universe
\def\S{S} % internal opinions
\def\Z{Z} % strategy space
\def\E{E} % set of equilibria
\def\OSUM{O_\mathrm{\textsc{Sum}}} % set of optimal states for SUM
\def\OMAX{O_\mathrm{\textsc{Max}}} % set of optimal states for Max

\def\SUM{\mathrm{\textsc{Sum}}} %social cost SUM
\def\MAX{\mathrm{\textsc{Max}}} %social cost MAX

\def\POASUM{\mathrm{\textsc{PoA}}^\mathrm{\textsc{Sum}}} %price of anarchy SUM
\def\POAMAX{\mathrm{\textsc{PoA}}^\mathrm{\textsc{Max}}} %price of anarchy MAX

\def\aggr{\mathrm{aggr}}
\def\reals{\mathbb{R}}

%%%%%%%%%%%%%%%%%%%%%_MY_NEW_THEOREMS_%%%%%%%%%%%%%%%%%%%%%%%%%%

%\theoremstyle{plain}
%\newtheorem{theorem}{Theorem}

\newtheorem{theorem}{Theorem}
\newtheorem{lemma}{Lemma}
\newtheorem{proposition}{Proposition}

%____________________________END______________________________%

%%%%%%%%%%%%%%%%%%%%%_BEGIN_DOCUMENT_%%%%%%%%%%%%%%%%%%%%%%%%%%

\begin{document}

\maketitle

\begin{abstract}
In this work we introduce a new model of decision-making by agents in a social network. Agents have innate preferences over the strategies but, because of the social interactions, the decision of the agents are not only affected by their innate preferences but also by the decision taken by their social neighbors. 
We assume that the strategies of the agents are embedded in an {approximate} metric space. 
Furthermore, departing from the previous literature, we assume that, due to the lack of information, each agent locally represents the trend of the network through an aggregate value, which can be interpreted as the output of an aggregation function.
We answer some fundamental questions related to the existence and efficiency of pure Nash equilibria.
\end{abstract}
%This allows to express the innate preferences as a function of the distance from a favourite strategy. 
%and each agent has a favourite strategy.  
%Agents innately prefer strategies closer to their favourite strategies to ones that are further away. 

%%%%%%%%%%%%%%%%%%%%%_INTRO_%%%%%%%%%%%%%%%%%%%%%%%%%%

\section{Introduction}
\label{sec:intro}

%Introductory paragraph
A significant volume of recent research investigates opinion formation and preference aggregation in social networks. The goal is to obtain a deeper understanding of how social influence affects decision making, and to which extent the tension between innate preferences and social influence could cause instability or inefficiency. The prevalent models are \emph{opinion formation games} \cite{BindelKO15} and \emph{discrete preference games} \cite{ChierichettiKO18}, whose principles go back to the classical works of \cite{Degroot} and \cite{Friedkin} on opinion formation. Though seemingly simple, both models are natural and expressible, have apparent similarities and some crucial differences.

% Gentle introduction and general context behind these opinion formation and discrete preference games
In a nutshell, both opinion formation and discrete preference games assume a finite population of $n$ agents and an underlying strategy space $Z$, which is the $[0, 1]$ interval for opinion formation games and a finite set with at least two strategies for discrete preference games (this is one of the key differences of the two models). Each agent $i$ has a fixed preferred strategy $s_i \in Z$ and adopts a strategy in $Z$, which represents a compromise between her preferred strategy and the strategies of her social neighbors. Both models need to formally define (i) how to quantify preferences over strategies; and (ii) how much each agent is influenced by her social neighbors. 

For the former, both models assume a distance function $d : Z \times Z \to \reals_{\geq 0}$ which quantifies dissimilarity between strategies. In opinion formation, $d$ is $L_2^2$ (i.e., the square of the $L_2$ norm, motivated by repeated averaging in \cite{Degroot,Friedkin}), while in discrete preference games, $d$ is any metric distance function (motivated by applications more general than opinion formation, as explained in \cite{ChierichettiKO18}). 

Both models assume that the influence exercised by agent's $j$ strategy on the strategy selection of agent $i$ is quantified by the \emph{influence weight} $w_{ij} \in [0, 1]$. Previous work often distinguishes between the symmetric case, where $w_{ij} = w_{ji}$, and the asymmetric case, where $w_{ij}$ and $w_{ji}$ may be different. Moreover, the confidence of each agent $i$ on her preferred strategy $s_i$ is quantified by the \emph{self-confidence} $\alpha_i \in [0, 1]$. %(often referred to as the \emph{self-confidence} of agent $i$). 
Another (less important) difference between opinion formation and discrete preference models is that most previous work on the latter (see e.g., \cite{ChierichettiKO18,AulettaCFGP16}) usually assumes the same self-confidence $\alpha \in [0, 1]$ for all agents $i$ and that each $\w_{ij}$ is either $1$ or $0$. %the same influence on $i$ from all her social neighbors $j$ (i.e., $w_{ij} = 1$, if $j$ is connected to $i$, and $w_{ij} = 0$, otherwise). 

% -- however, \cite{LolakapuriBNPD19} considers discrete preference games with general self-confidence levels $w_i$ and influence weights $w_{ij}$). 

The discussion above is elegantly summarized by the cost function of agent $i$ in a strategy profile $\vec{z}$: 
\[
 \bar{c}_i(\vec{z}) = \alpha d(s_i, \vec{z}(i)) + (1-\alpha) \sum_{j \neq i} w_{ij} d(\vec{z}(i), \vec{z}(j))\,.
\]
Naturally, each agent $i$ selects her strategy $\vec{z}(i)$ so as to minimize this cost. 

In the last few years, there has been considerable interest in equilibrium properties (e.g., existence, computational complexity, convergence, price of anarchy and stability) of opinion formation and discrete preference games (and several variants and generalizations of them). The main message is that opinion formation games are well-behaved due to their single-dimensional continuous strategy space, while discrete preference games exhibit more complex behavior. Specifically, opinion formation games admit a unique equilibrium which can be computed efficiently and has small price of anarchy in the symmetric case (see e.g., \cite{BindelKO15,BGM13,GS12}). In discrete preference games, an equilibrium with low price of stability exists, but there might exist multiple equilibrium points, some with unbounded price of anarchy \cite{ChierichettiKO18}; moreover, finding an equilibrium can be computationally hard \cite{LolakapuriBNPD19}, while strategies at equilibrium may be significantly different from the agents' preferred strategies. 

%\paragraph{Motivation.}
\medskip\noindent{\bf Motivation.}
An important assumption underlying virtually all results above is that each agent is fully aware of the strategies of her social neighbors and responds optimally to them. This assumption is crucial for establishing uniqueness and convergence to equilibrium in opinion formation (see e.g., \cite{GS12}) and forms the basis for the price of anarchy and stability results (see e.g., \cite{BindelKO15,BGM13,BiloFM18,ChierichettiKO18}). However, having access to the strategies of all one's neighbors is questionable in real life and in modern large online social networks, just because getting to know these strategies requires a large amount of information exchange. For (a somewhat extreme) example, imagine that one does not crystallize her opinion on a topic before she has extensively discussed it with all her acquaintances! 

Therefore, the assumption that agents form their strategies based on explicit (and complete) knowledge of the strategies of all their neighbors has been criticized, especially in the context of opinion formation games and their best response dynamics. Researchers have studied opinion formation with restricted information regimes \cite{UL04} and opinion dynamics where agents learn the strategy of a single random neighbor in each round \cite{FPS16,FKKS18}. The goal is to understand the extent to which limited information about the strategies of one's social neighbors affects the properties of equilibria in opinion formation. 

In this work, we take a different, and to the best of our knowledge, novel approach towards addressing the same research question. Instead of focusing on best response dynamics and assuming that the strategies of a small random set of neighbors are available in each round, we assume that each agent has access to a single representative strategy, which can be regarded as the output of an \emph{aggregation function} that condenses the strategies of her neighbors into a single one. This aggregate strategy could be provided by the network (in case of online social networks, e.g., Facebook or Twitter could provide a summary of the main posts of one's friends on a specific topic) or could be estimated by a poll. To further motivate our approach, we may think of traditional political voting, where voters have fixed innate preferences over the candidates, but polls (which is a form of preference aggregation) might cause the voters to change their vote (see also \cite{EpitropouFHS19}, where the opinion formation model involves an estimation of the average opinion of all agents).

%Instead of a traditional political voting, we may also consider an informal voting process, through an online voting tool, in which voters are free to revise their votes as a response to the latest observed outcome obtained through a predefined aggregation rule (see also \cite{EpitropouFHS19}, where the opinion formation model involves an estimation of the average opinion of all agents). 

%\paragraph{Preference Games with Local Aggregation: The Model}
\medskip\noindent{\bf Preference Games with Local Aggregation: Our Model.}
We introduce a very general game-theoretic model of decision-making by agents in a social network, which we refer to as \emph{preference games with local aggregation} (or simply \emph{preference games}, for brevity). Similarly to opinion formation and discrete preference games, each agent selects a strategy trying to be faithful as much as possible to her preferred strategy (which is immune to the choices made by the other agents) and, at the same time, to blend with the environment. But in our model, the environment is summarized by the aggregate strategy of her social neighbors. 

The basic formal setting of preference games with local aggregation is inspired by that of discrete preference games (but there are some crucial differences, as we explain below). As in previous work, we consider a strategy space $Z$ which is the same for all agents, but in preference games with local aggregation, $Z$ may be either discrete or continuous. Another important difference from previous work is that we allow preferred strategies not to belong to $Z$. %(e.g., $Z$ may be discrete, while preferred strategies may belong to the convex hull of $Z$). 
This allows the agents to have more elaborate preferred strategies and to account for preferences that possibly cannot be fully disclosed in public. 
So, we assume a strategy ``universe'' $U$ and that $Z \subseteq U$ (e.g., $Z$ may be discrete, while $U$ may be the convex hull of $Z$). We say that the game is \emph{restricted}, if the preferred strategy $s_i \in U\setminus Z$ %$s_i \not\in Z$ 
for all agents $i$, \emph{unrestricted}, if $s_i \in Z$ for all agents $i$, and \emph{semi-restricted}, otherwise. 

As in discrete preference games, we assume a %metric 
distance function $d : U \times U \to \reals_{\geq 0}$, which quantifies dissimilarity between strategies. But, instead of an exact metric, we let $d$ be an approximate metric %distance function, 
so as to also allow for other natural dissimilarity functions, such as $L_2^2$. 
Moreover, we assume the same self-confidence level $\alpha \in [0, 1]$ for all agents and that influence weights $w_{ij}$ are not necessarily symmetric, have $w_{ii} = 0$ and are normalized so that $\sum_{j} w_{ij} = 1$, i.e., the total influence exercised to any agent $i$ sums up to $1$ (in \cite{ChierichettiKO18}, influence weights are symmetric and not normalized). 

The major new key ingredient is an \emph{aggregation function} $\aggr$ which, for each agent $i$, maps the strategies $\vec{z}_{-i} = (z_1, \ldots, z_{i-1}, z_{i+1}, \ldots, z_n)$ of all agents other than $i$ and the influence weights $\vec{w}_i = (w_{ij})_{j\in \N}$ on $i$ to an aggregated strategy $\aggr(\vec{z}_{-i}, \vec{w}_i) \in Z$ that ``summarizes'' the strategies in $\vec{z}_{-i}$. We usually write $\aggr_i(\vec{z}_{-i})$, instead of $\aggr(\vec{z}_{-i}, \vec{w}_i)$, for brevity. Typical aggregation functions are the \emph{mean} (the best response function in opinion formation) and the \emph{median} (the best response function in discrete preference games). However, most of our results hold for general aggregation functions that satisfy \emph{unanimity} (i.e., if $\vec{x}_{-i} = (x, \ldots, x)$, then $\aggr_i(\vec{x}_{-i}) = x$) and \emph{consistency} (i.e., if $\vec{x}_{-i}$ and $\vec{y}_{-i}$ satisfy $\sum_{j \neq i} w_{ij} d(\vec{x}(j), \vec{y}(j)) = 0$, then $\aggr_i(\vec{x}_{-i}) = \aggr_i(\vec{y}_{-i})$). We refer to aggregation functions that satisfy unanimity and consistency as \emph{feasible}. 

In preference games with local aggregation, the cost of agent $i$ in a strategy profile $\vec{z}$ is: 
\begin{equation}\label{eq:cost}
 c_i(\vec{z}) = \alpha d(s_i, \vec{z}(i)) + (1-\alpha) d(\vec{z}(i), \aggr_i(\vec{z}_{-i}))\,.
\end{equation}
Namely, the cost of agent $i$ is a convex combination of her \emph{innate cost} $d(s_i, \vec{z}(i))$ and her \emph{disagreement cost} $d(\vec{z}(i), \aggr_i(\vec{z}_{-i}))$. 
%The former is the cost of adopting a strategy $\vec{z}(i)$ dissimilar to $i$'s innate preference $s_i$, and the latter is the cost of adopting a strategy $\vec{z}(i)$ dissimilar to a strategy aggregating the strategies in her social circle. 
As usual, each agent $i$ selects her strategy $\vec{z}(i)$ so as to minimize $c_i(\vec{z})$. The crucial difference from opinion formation and discrete preference models is that strategy selection of agent $i$ solely depends on $\aggr_i(\vec{z}_{-i})$, which is a single strategy in $\Z$, and not on the entire strategy vector $\vec{z}_{-i}$. 
An interesting direction for future research is to assume that $\aggr_i(\vec{z}_{-i})$ may belong to $\U \setminus \Z$.  %and does not necessarily belong to the strategy space $\Z$.

%\paragraph{Contribution.}
\medskip\noindent{\bf Contribution.}
The conceptual contribution of our work is the new model of preference games with local aggregation. The model is mostly inspired by discrete preference games, but is quite general and allows for a systematic study and a new perspective on the fundamental question of how much limited information about the preferences of one's social neighbors affects the equilibrium properties in opinion formation and preference aggregation in social networks. 
On the technical side, we provide a comprehensive set of general results on the existence and the structure of equilibria and on the price of anarchy of preference games with local aggregation. Our results hold for {any approximate metric} distance function $d$ and any feasible aggregation function $\aggr$. A general message of our results is that low self-confidence levels (i.e., $\alpha \leq 1/2$) help with the existence of equilibria and simplify their structure, while high self-confidence levels (i.e., $\alpha > 1/2$) help with the price of anarchy. Moreover, our price of anarchy analysis for $\alpha > 1/2$ implies novel bounds on the price of anarchy of certain variants of opinion formation and discrete preference games.

Specifically, in Section \ref{sec:existence} we show that if $\alpha \leq 1/2$, \emph{consensus} (i.e., a state where all agents adopt the same strategy) is a pure Nash equilibrium of preference games with local aggregation (Theorem~\ref{thm:existence:<}); 
if $\alpha \geq 1/2$, we show that the state in which each agent $i$ adopts her preferred strategy $s_i$ is a pure Nash equilibrium, and moreover such equilibrium is unique when $\alpha > 1/2$ (Theorem \ref{thm:existence:>}). %existence of pure Nash equilibria only for unrestricted games, (i.e., when $s_i \in Z$ for all agents $i$),  where the unique equilibrium is when each agent $i$ adopts her preferred strategy $s_i$.
{The above two results hold under the more stringent  assumption that $d$ is an exact metric.} 
 Existence of pure Nash equilibrium for restricted games with $\alpha > 1/2$ requires more assumptions (e.g., on the strategy space, the aggregation function, or the distance function) and is an interesting direction for further research.

In Section \ref{sec:PoA} we consider the price of anarchy with respect to the total cost and the maximum cost of the agents. We observe that  if $\alpha = 0$, the cost of every agent at equilibrium is $0$ and therefore the price of anarchy is $1$. On the other hand, if $\alpha \in (0, 1/2]$, the price of anarchy of restricted and unrestricted games can be unbounded for both objectives (Proposition \ref{prop:poa:(0,1/2]} and Proposition \ref{prop:poa:(0,1/2]:b}). So, if self-confidence level is low, the price of anarchy of preference games with local aggregation behaves similarly to that of discrete preference games \cite{ChierichettiKO18} and of opinion formation games with binary strategies \cite{FerraioliGV16}. 
Interestingly, if $\alpha > 1/2$, we show that the price of anarchy is bounded from above by $1+\frac{1-\alpha}{\alpha} \frac{\delta \tau}{\beta}$, for the total cost (Theorem~\ref{thm:poa:sum}), and by $1+\frac{1-\alpha}{\alpha} \frac{\tau}{\beta}$, for the maximum cost (Theorem~\ref{thm:poa:max}). 
We believe that all the three parameters $\delta$, $\beta$ and $\tau$ used in these bounds are intuitive and of interest; all of them are formally defined in Section~\ref{sec:model}. 
$\delta$ is the maximum \emph{social impact} $\sum_{j \neq i} w_{ji}$ of an agent $i$. Since $\sum_{j \neq i} w_{ij} = 1$, $\delta$ quantifies the asymmetry between the influence received and exercised by any agent in the social network. $\beta$ is the so-called maximum \emph{boundary} of any agent. The boundary $\beta_i$ of agent $i$ quantifies how much closer a strategy $z$ can be to $i$'s preferred strategy $s_i$ compared against an equilibrium strategy of $i$. We believe that $\beta$ is a natural parameter and can be exploited in the proof of price of anarchy/stability bounds for other generalizations of discrete preference games. 
The most interesting parameter is the \emph{stretch} $\tau$, which directly quantifies how much we lose, in terms of equilibrium efficiency, because we only have access to an aggregate of the strategies in one's social neighborhood. %(see Section~\ref{sec:model} for a formal definition).
 To better demonstrate this point, let's assume that the aggregation function is the (weighted) median. Then, in a standard discrete preference game, the best response of an agent $i$ is the weighted median of $i$'s preferred strategy $s_i$ and $i$'s social neighbor strategies in $\vec{z}_{-i}$. In a preference game with local aggregation, on the other hand, $i$ receives the weighted median $\aggr_i(\vec{z}_{-i})$ and computes her best response as a weighted median of $s_i$ and $\aggr_i(\vec{z}_{-i})$. Communication efficient as it is, the latter may lead to more inefficient equilibria, since a small change in $\vec{z}_{-i}$ might cause a significant change in $\aggr_i(\vec{z}_{-i})$. The extent to which this can happen is quantified by $\tau$. 

In Section~\ref{sec:tau_beta}, we provide upper bounds on $\tau$ and $\beta$ ($\delta$ is always upper bounded by $n-1$), which implies that the price of anarchy of preference games with local aggregation is always bounded if $\alpha > 1/2$. Interestingly, our bounds on $\beta$ depend only on $\alpha$, while our bounds on $\tau$ may depend on the metric space, the influence weights and the aggregation function. 

In Section~\ref{sec:voting}, we introduce a specific preference game, which is motivated by $k$-approval voting and generalizes opinion formation with binary strategies \cite{FerraioliGV16}. The strategy space $Z$ consists of all binary strings of length $m$ with $k$ ones, preferred strategies lie in the convex hull of $Z$, local aggregation is the weighted median, and the distance function is $L_2^2$. %(which satisfies the triangle inequality only approximately in $U$, but it is equivalent to the Hamming distance, when restricted to $Z$).
Since $L_2^2$ is a $2$-approximate metric in $U$ (but it is equivalent to the Hamming distance when restricted to $Z$), results from Section \ref{sec:PoA} carry over this special case. 
The main result of this section is an upper bound on $\beta$ (Theorem \ref{thm:boundary:voting}), which implies an interesting upper bound on the price of anarchy for all $\alpha > 1/2$. An intriguing direction for further research is to determine under which assumptions $k$-approval voting game admits pure Nash equilibria for $\alpha > 1/2$. 

%Due to lack of space, we defer most of the proofs to the full version of this work. 

%\paragraph{Other Related Work.}
\medskip\noindent{\bf Other Related Work.}
Discrete preference games were introduced in \cite{ChierichettiKO18}, where it was shown that they are potential games, that the price of anarchy can be unbounded and that the price of stability is at most $2$. Moreover, the properties of discrete preference games for richer metrics, such as tree metrics, were studied. Recently, \cite{LolakapuriBNPD19} proved that computing a pure Nash equilibrium of discrete preference games is PLS-complete, even in a very restricted setting. Discrete preference games were generalized in \cite{AulettaCFGP16} and consistency between preferred strategies and equilibrium strategies were considered in \cite{AulettaCFGP17}. 

Our $k$-approval voting model bears some resemblance to opinion formation with binary preferences \cite{FerraioliGV16}. %Our analysis implies that the price of anarchy for that model is at most $\delta / (2\alpha-1)^2$ for $\alpha > 1/2$. 
%\cite{FerraioliGV16} 
They proved that the price of anarchy is unbounded for $\alpha \leq 1/2$. Our analysis complements theirs by showing that the price of anarchy for $\alpha > 1/2$ is at most $\delta / (2\alpha-1)^2$. %for $\alpha > 1/2$.
%\cite{FerraioliGV16} proved that the price of anarchy is unbounded for $\alpha \in [0, 1/2]$. 

Aggregating preferences under some metric %dissimilarity 
function has been received attention in algorithms (see e.g., \cite{AilonCN08}) and in social choice (see e.g., \cite{AnshelevichBEPS18}). 
%To the best of our knowledge, 
Ours is the first work where aggregation under metric dissimilarity functions is used in modelling decision-making by agents in a social network. %preference and opinion formation in social networks. 

%%%%%%%%%%%%%%%%%%%%%_MODEL_%%%%%%%%%%%%%%%%%%%%%%%%%%

\section{Notation, Definitions and Preliminaries} 
\label{sec:model}

Most of the notation and the model definition are introduced in Section~\ref{sec:intro}. Next, we introduce some additional notation and discuss some important preliminaries. 

We recall that $\N=\{1, \ldots, n\}$, with $n \geq 2$, is the set of agents, $U$ is the strategy universe, and $Z \subseteq U$, with $|Z| \geq 2$, is the strategy space of the agents. $S = \{ s_1, \ldots, s_n \} \subseteq U$ is the set of the agents' preferred strategies, and $N_S = \{ i \in N: s_i \in Z\}$ is the set of agents with preferred strategy in $Z$. So, $N_S = N$ denotes an unrestricted game, while $N_S = \emptyset$ denotes a restricted one. We recall that $w_{ij} \in [0, 1]$ is the amount of influence agent $j$ imposes on agent $i$. Unless stated otherwise, $w_{ij}$ may be different than $w_{ji}$. We always assume that $w_{ii} = 0$ and that $\sum_{j=1}^n w_{ij} = 1$. We recall that $\alpha \in [0, 1]$ is the confidence-level of the agents. We say that the agents are \emph{stubborn}, if $\alpha \in (1/2, 1]$, and \emph{compliant}, otherwise.

We refer to any $\vec{z} \in Z^n$ as a \emph{state} of the game. For any state $\zz$ and any strategy $z$, we let $[\zz_{-i}, z]$ be the new state obtained from $\zz$ by replacing its $i$-component $\vec{z}(i)$ with $z$ and keeping the remaining components unchanged. If $\vec{z} = (z, \ldots, z)$, %i.e., all agents have the same strategy $z$ in $\vec{z}$, 
we say that $\vec{z}$ is a \emph{consensus} on $z$.  

%We say that $\d$ is a \emph{$\rho$-approximate metric}, for some $\rho > 1$, if it satisfies (i), (ii) and $d(x, y) \leq \rho(d(x, z) + d(z, y))$, for all $x, y, z \in U$. 

We say that $\d: \U \times \U \mapsto \reals_{\geq 0}$ is a \emph{$\rho$-approximate metric}, for some $\rho \geq 1$, if it satisfies (i) $d(x, x) = 0$, for all $x \in U$; (ii) symmetry, i.e., $d(x, y) = d(y, x)$, for all $x, y \in U$; and (iii) (approximate) triangle inequality, i.e., $d(x, y) \leq \rho(d(x, z) + d(z, y))$, for all $x, y, z \in U$. 
%A metric space $\metric{\U, \d}$ is \emph{uniform} if $d(x, y) = 1$ for all $x \neq y$. 
We say that $\d$ is an exact metric (or simply metric) if $\rho = 1$. 
We say that $\d$ is \emph{uniform} if it is an exact metric such that $d(x, y) = 1$ for all $x \neq y$.
We assume that $d$ is a $\rho$-approximate metric, for some $\rho \geq 1$, unless stated otherwise.
%We assume that $d$ is a (general) metric on $U$, unless stated otherwise. 

%\paragraph{Feasible aggregation functions.}
\medskip\noindent{\bf Aggregation Functions.}
We consider aggregation functions that satisfy (i) \emph{unanimity}, i.e., if $\vec{x}_{-i}$ is a consensus on $x$, then $\aggr_i(\vec{x}_{-i}) = x$; and (ii) \emph{consistency}, i.e., for all $\vec{x}_{-i}$, $\vec{y}_{-i}$ with $\sum_{j \neq i} w_{ij} d(\vec{x}(j), \vec{y}(j)) = 0$, $\aggr_i(\vec{x}_{-i}) = \aggr_i(\vec{y}_{-i})$. We say that such aggregation functions are \emph{feasible}. 
In this work we focus on {feasible} aggregation functions. %that satisfy \emph{unanimity} and \emph{consistency}. 
When $\d$ is an exact metric on $\Z$, notable examples of feasible aggregation functions are the \emph{Fr\'{e}chet mean} and the \emph{Fr\'{e}chet median}.

Given any state $\zz$, the Fr\'{e}chet mean of agent $i$ in $\zz$, denoted by  $\Fmean_i(\zz_{-i})$, is any strategy in $\Z$ that minimizes the weighted sum of its squared distances to the strategies in $\zz_{-i}$. Formally, 
\begin{equation}\label{def:mean}
	\Fmean_i(\zz_{-i}) \in  {\arg\min}_{\y\in \Z} \sum_{j \neq i}\w_{ij}\d^2\big(\y, \zz(j)\big)\,.
\end{equation}
The Fr\'{e}chet median of agent $i$ in $\zz$, denoted by  $\Fmedian_i(\zz_{-i})$, is any strategy that minimizes the weighted sum of its distances to the strategies in $\zz_{-i}$\,:
\begin{equation}\label{def:median}
	\Fmedian_i(\zz_{-i}) \in  {\arg\min}_{\y\in \Z} \sum_{j\neq i}\w_{ij}\d\big(\y, \zz(j)\big)\,.
\end{equation}
We can show that the Fr\'{e}chet mean and the Fr\'{e}chet median are indeed feasible aggregation functions. 

The following proposition shows that both aggregation functions are feasible.

\begin{proposition}\label{prop:aggr:feasibility}
The Fr\'{e}chet mean and the Fr\'{e}chet median are feasible aggregation rules. 
\end{proposition}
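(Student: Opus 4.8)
The plan is to treat both aggregation functions at once by observing that each selects a minimizer over $\y \in \Z$ of an objective of the form $F(\y) = \sum_{j \neq i} \w_{ij}\, g\big(\d(\y, \zz(j))\big)$, where $g(t) = t$ for the Fr\'echet median \eqref{def:median} and $g(t) = t^2$ for the Fr\'echet mean \eqref{def:mean}. The only properties of $g$ I would use are that it is nonnegative, that $g(0) = 0$, and that $g(t) > 0$ for $t > 0$; and since $\d$ restricted to $\Z$ is an exact metric, $\d(a,b) = 0$ holds if and only if $a = b$. These facts drive both verifications.

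For unanimity, suppose $\xx_{-i}$ is a consensus on some $\x \in \Z$, so $\xx(j) = \x$ for every $j \neq i$. Using $w_{ii} = 0$ and the normalization $\sum_{j} \w_{ij} = 1$, the objective collapses to $F(\y) = g\big(\d(\y, \x)\big)\sum_{j\neq i}\w_{ij} = g\big(\d(\y,\x)\big)$. Since $g \geq 0$ and $g(0) = 0$, this is minimized at $\y = \x$ with value $0$; and because $g(t) > 0$ for $t > 0$ while $\d(\y,\x) = 0 \iff \y = \x$ on $\Z$, the minimizer is in fact unique. Hence $\aggr_i(\xx_{-i}) = \x$ for $\aggr \in \{\Fmean, \Fmedian\}$.

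For consistency, take $\xx_{-i}$ and $\yy_{-i}$ with $\sum_{j\neq i}\w_{ij}\,\d(\xx(j),\yy(j)) = 0$. As every summand is nonnegative, each $j$ with $\w_{ij} > 0$ satisfies $\d(\xx(j),\yy(j)) = 0$, hence $\xx(j) = \yy(j)$ by exactness of the metric on $\Z$, while the indices with $\w_{ij} = 0$ contribute nothing to either objective. Therefore, for every $\y \in \Z$, $\sum_{j\neq i}\w_{ij}\,g(\d(\y,\xx(j))) = \sum_{j\neq i}\w_{ij}\,g(\d(\y,\yy(j)))$, i.e., the two objectives coincide as functions of $\y$. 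Consequently they share the same set of minimizers, so any fixed selection returns the same strategy and $\aggr_i(\xx_{-i}) = \aggr_i(\yy_{-i})$.

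I expect no serious obstacle here; the one point requiring care is that the $\arg\min$ in \eqref{def:mean} and \eqref{def:median} need not be a singleton, so $\aggr$ is well defined only after fixing a deterministic tie-breaking rule among minimizers. The consistency argument is robust to this because it shows the full objective functions (not merely their minimal values) agree, making the minimizer sets literally identical; and the unanimity argument sidesteps the issue altogether by establishing uniqueness of the minimizer outright.
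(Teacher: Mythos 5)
Your proof is correct and follows essentially the same route as the paper's: unanimity because every term of the objective vanishes at the consensus strategy, and consistency because the two objective functions agree pointwise once the zero-weight coordinates are discarded (the paper proves this for the Fr\'echet mean and notes the median is identical, whereas you unify both via the function $g$). Your remark about tie-breaking in the $\arg\min$ is a reasonable extra precaution but does not change the argument.
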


\begin{proof}
	We prove the statement for the Fr\'{e}chet mean. 
	A virtually identical argument applies to the Fr\'{e}chet median.
	
	Let $i$ be any agent, and $\xx$ and $\yy$ be any pair of states.
	It is straightforward to verify unanimity, namely, that if $\xx_{-i}$ is a consensus on $\x\in\Z$, then $\Fmean(\xx_{-i}) = \x$. 
	In fact, every term of the summation would be $0$ in $\x$.
	
We proceed to prove consistency. Let us assume that $\sum_{j \neq i} w_{ij} d(\vec{x}(j), \vec{y}(j)) = 0$. If $\vec{x}(j) = \vec{y}(j)$ for all coordinates $j \neq i$, then $\xx_{-i}$ and $\yy_{-i}$ are identical and $\Fmean_{i}(\xx_{-i}) =\Fmean_{i}(\yy_{-i})$. So, let us assume that for some coordinates $j$, $\vec{x}(j) \neq \vec{y}(j)$. Since $\sum_{j \neq i} w_{ij} d(\vec{x}(j), \vec{y}(j)) = 0$, it must be $w_{ij} = 0$ for all coordinates $j$ with $\vec{x}(j) \neq \vec{y}(j)$. Equivalently, for every $j$ with $\w_{ij} > 0$, we have that $\xx(j) = \yy(j)$. 
	Therefore, for every $\y \in \Z$, it holds that
\begin{align*}
	\sum_{j\in \N\setminus\{i\}}\w_{ij}\d^2\big(\y, \xx(j)\big) 
	& = 
	\sum_{
	\begin{subarray}{c} 
		j\in \N\setminus\{i\} \\  \w_{ij} > 0
	\end{subarray}
	}
	\w_{ij}\d^2\big(\y, \xx(j)\big) %\\
	%& = 
	=
	\sum_{
	\begin{subarray}{c} 
		j\in \N\setminus\{i\} \\  \w_{ij} > 0
	\end{subarray}
	}
	\w_{ij}\d^2\big(\y, \yy(j)\big) %\\
	%& = 
	=
	\sum_{j\in \N\setminus\{i\}}\w_{ij}\d^2\big(\y, \yy(j)\big),
\end{align*}
	which implies that $\Fmean_{i}(\xx_{-i}) =\Fmean_{i}(\yy_{-i})$.
\end{proof}

%\paragraph{Pure Nash equilibria, social optima and price of anarchy.} 
\medskip\noindent{\bf Pure Nash Equilibria, Social Optima and Price of Anarchy.} 
A \emph{pure Nash equilibrium} (or \emph{equilibrium}, for brevity) is a state $\vec{z}$ such that for every agent $i$ and every strategy $z \in Z$, $c_i(\zz) \leq c_i([\zz_{-i}, z])$. $\E \subseteq \Z^n$ denotes the set of all pure Nash equilibria of a given preference game. A strategy $z^\ast \in \Z$ is a \emph{best response} of agent $i$ to a state $\zz$, if $z^\ast \in \arg\min_{z\in \Z} c_i([\zz_{-i}, z])$. %We say that a strategy $\x\in \Z$ is a \emph{dominant strategy} of agent $i$, if $c_i\big([\xx_{-i},\x]\big) \leq c_i\big([\xx_{-i},\y]\big)$, for all states $\xx_{-i}$ and $\y \in \Z$.
We say that a strategy $\x\in \Z$ is a \emph{(strictly) dominant strategy} of agent $i$, if $c_i\big([\xx_{-i},\x]\big) \leq c_i\big([\xx_{-i},\y]\big)$ ($<$, if strictly), for all states $\xx_{-i}$ and strategies $\y \in \Z$.

We measure the efficiency of each state $\zz$ according to a \emph{social objective}. We consider two social objectives, the \emph{social cost} $\SUM(\zz) = \sum_{i\in \N}c_i(\zz)$ and the \emph{maximum cost} $\MAX(\zz) = \max_{i\in \N} c_i(\zz)$. A state $\oo$ is \emph{optimal wrt. $\SUM$},  if $\SUM(\oo) \leq \SUM(\zz)$, for all states $\zz$. We denote by $\OSUM \subseteq \Z^n$ the set of optimal states wrt. $\SUM$, i.e., $\OSUM = \arg\min_{\zz \in \Z^n}\SUM(\zz)$. %Optimal states wrt. $\MAX$ and $\OMAX$ are defined similarly. 
Similarly, a state $\oo$ is \emph{optimal wrt. $\MAX$}, if $\MAX(\oo) \leq \MAX(\zz)$, for all states $\zz$. We let $\OMAX = \arg\min_{\zz \in \Z^n}\MAX(\zz)$ be the set of all optimal states wrt. $\MAX$. 

The \emph{price of anarchy} of a game wrt. $\SUM$ is $\POASUM = \max_{\ee \in \E}\frac{\SUM(\ee)}{\SUM(\oo)}$, if $\SUM(\oo) > 0$ for some state $\oo\in \OSUM$. If $\SUM(\oo) = 0$, then $\POASUM = +\infty$, if $\E \neq \OSUM$, and $\POASUM = 1$, if $\E = \OSUM$. The definition of $\POAMAX$ is similar.

%\paragraph{Equivalence and relative distance between states.}
\medskip\noindent{\bf Equivalence and Relative Distance between States.}
For every pair of states $\xx$, $\yy$, $\D(\xx,\yy) = \{j\in \N : \xx(j) \neq \yy(j)\}$ is the set of agents with different strategies in $\xx$ and $\yy$. If $\D(\xx,\yy) = \emptyset$, we say that $\xx$ and $\yy$ are \emph{globally equivalent}.
For every agent $i$ and all pairs of states $\xx$, $\yy$, we define the \emph{relative distance} of $\xx$ and $\yy$ for $i$ as $\ds_i(\xx,\yy) = \sum_{j\neq i}\w_{ij}\d\big(\xx(j),\yy(j)\big)$.
If  $\ds_i(\xx,\yy) = 0$, $\xx$ and $\yy$ are \emph{equivalent} for $i$. We observe that $\D(\xx,\yy) = \emptyset$ implies $\ds_i(\xx,\yy) = 0$ (while converse may not be true). Moreover, $\ds_i\big([\xx_{-i}, x],[\xx_{-i},y]\big) = 0$,  for all strategies $\x$ and $\y$. 
%Finally, as a result of property (b), we have that the symmetry extends also to $\pi_i$, i.e.,  $\ds_i(\xx,\yy) = \ds_i(\yy,\xx)$.

%In the reminder of this work we focus only on the case $\STUBB > 1$.
%As we will see in the following sections,  beside the stubbornness of the agents, the inefficiency of the equilibria are influenced by three additional non-negative parameters:  the \emph{social impact} of the agents, the \emph{stretch factor} and the \emph{boundary}. We continue our discussion  with the formal definition of these parameters; before that we introduce some notation about to denote the relations between states. 

%\paragraph{Social impact, stretch factor and  boundary.} 
\medskip\noindent{\bf Social Impact, Stretch and Boundary.}
The \emph{social impact} of agent $i$ is $\SI_i = \sum_{j\in \N}\w_{ji}$ and quantifies the intensity by which $i$ influences the environment. The \emph{global social impact} is $\SI = \max_{j \in N} \SI_j$. We observe that $\SI \in [1, n-1]$ and $\sum_{i \in N} \SI_i = n$. We refer to the case where influence weights are symmetric, i.e., $w_{ij} = w_{ji}$ for all $i$ and $j$, and $\SI = 1$ as the \emph{fully symmetric} case. 

The \emph{stretch} $\tau_i$ of agent $i$ quantifies how sensitive the aggregation function is wrt. changes in the state of the game. Formally, we let 
\[ \hat{\tau}_i =\!\! \inf_{\begin{subarray}{c}
                    \vec{e} \in \E\\
                    \vec{y} \in Z^n\\
                    \ds_i(\vec{e}, \vec{y}) > 0
                  \end{subarray}}
\!\!\!\!\left\{ r \geq 0\,\Big|\,\frac{d(\aggr_{i}(\vec{e}_{-i}),\aggr_{i}(\vec{y}_{-i}))}{\ds_i(\vec{e}, \vec{y})} \leq r \right\}. \]
We define $\tau_i = \max\{\hat{\tau_i}, 1\}$, so as to account for the case where $\ds_i(\vec{e}, \vec{y}) = 0$ and $d(\aggr_{i}(\vec{e}_{-i}),\aggr_{i}(\vec{y}_{-i})) = 0$ (recall the definition of feasible aggregation functions). Therefore, existence of $r$ is always guaranteed and $\tau_i$ is well defined. 
Since $\tau_i$ is used in the price of anarchy bounds, we can restrict its definition to optimal states $\vec{o}$, instead of arbitrary states $\vec{y}$. We use this more restricted definition in the proofs of propositions~\ref{prop:strech:o-consensus}~and~\ref{prop:strech:e-consensus}, in Section~\ref{sec:tau_beta}. 

The \emph{(global) stretch} is $\tau = \max_{j \in N} \tau_j$\,. At the conceptual level, the stretch quantifies how much the price of anarchy increases because agents only have access to an aggregate of the strategies in $\vec{z}$, instead of $\vec{z}$ itself. 

The \emph{boundary} of agent $i$, denoted by $\beta_i$ quantifies how much closer a strategy $x$ can be to $s_i$ compared against an equilibrium strategy $\vec{e}(i)$ of $i$. Formally, 
\begin{equation}\label{def:boundary} 
\beta_i = \min_{\vec{e} \in E, x \neq \vec{e}(i)} \frac{d(x, s_i)}{d(x, \vec{e}(i))}.
\end{equation}
For nontrivial games, there always exists a strategy $x$ with $d(x, \vec{e}(i)) > 0$. Thus, $\beta_i$ is well defined. The \emph{(global) boundary} is $\beta = \min_{j \in N} \beta_j$\,.

%%%%%%%%%%%%%%%%%%%%%_Existence_%%%%%%%%%%%%%%%%%%%%%%%%%%

\section{Existence of Pure Nash Equilibria}\label{sec:existence}

We first characterize the best responses of compliant and stubborn agents. The proof of Lemma~\ref{lem:dominant-strategy} is analogous to that of Lemma~\ref{lem:BR}. 
{We remark that the results in this section hold only under the  assumption that $\d$ is an exact metric.} 

\begin{lemma}\label{lem:BR}
If $\STUBB \leq 1/2$ (resp. $\STUBB < 1/2$), $\aggregation_i(\xx_{-i})$ is a (resp. the unique) best response of agent $i$ to $\xx_{-i}$, for every state $\xx$. 
\end{lemma}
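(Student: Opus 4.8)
The plan is to exploit the fact that the aggregate $\aggr_i(\xx_{-i})$ does not depend on agent $i$'s own choice, so computing a best response of $i$ against $\xx_{-i}$ reduces to a one‑variable minimization over $z \in \Z$. First I would fix an agent $i$ and a state $\xx$ and set $a = \aggr_i(\xx_{-i})$; since $\aggr$ maps into $\Z$, the strategy $a$ is an admissible choice for $i$. Writing the cost of playing $z$ against $\xx_{-i}$ as $f(z) = c_i([\xx_{-i}, z]) = \STUBB\,\d(\s_i, z) + (1-\STUBB)\,\d(z, a)$, the goal is to show that $f$ is minimized at $z = a$, and strictly so when $\STUBB < 1/2$. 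The starting observation is that the disagreement term vanishes at $z=a$, giving $f(a) = \STUBB\,\d(\s_i, a)$.

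The key step is a triangle‑inequality comparison against this baseline. For an arbitrary $z \in \Z$,
\[
f(z) - f(a) = \STUBB\big(\d(\s_i, z) - \d(\s_i, a)\big) + (1-\STUBB)\,\d(z, a).
\]
Because $\d$ is an exact metric, the reverse triangle inequality $\d(\s_i, z) - \d(\s_i, a) \geq -\,\d(z, a)$ holds, whence
\[
f(z) - f(a) \;\geq\; (1 - 2\STUBB)\,\d(z, a).
\]
When $\STUBB \leq 1/2$ the right‑hand side is nonnegative, so $f(z) \geq f(a)$ for every $z$ and $a = \aggr_i(\xx_{-i})$ is a best response. When $\STUBB < 1/2$ and $z \neq a$, exactness of $\d$ forces $\d(z, a) > 0$, so $(1 - 2\STUBB)\,\d(z, a) > 0$ and $f(z) > f(a)$; hence $a$ is the \emph{unique} best response.

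There is no real obstacle here beyond bookkeeping; the argument is a direct calculation once one notices that $a$ is a feasible strategy and that the cost is a convex combination of two distances sharing the variable $z$. The one subtle point worth flagging is that the proof relies essentially on $\rho = 1$: the reverse triangle inequality used above holds with constant $1$ precisely for exact metrics, and it is exactly this that yields the clean threshold at $\STUBB = 1/2$. This is why the lemma is stated only under the exact‑metric assumption noted just before it; for a genuinely approximate metric the corresponding bound degrades and the $1/2$ threshold need no longer be sharp.
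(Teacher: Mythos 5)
Your proof is correct and is essentially the paper's own argument in rearranged form: the reverse triangle inequality $\d(\s_i,z)-\d(\s_i,a)\geq -\d(z,a)$ is exactly the paper's triangle-inequality step, and $(1-2\STUBB)\geq 0$ is the paper's observation that $\STUBB\leq 1-\STUBB$. The only caveat is that your strictness claim for $\STUBB<1/2$ (like the paper's) implicitly assumes $\d(z,a)>0$ whenever $z\neq a$, which the paper's definition of an exact metric (a pseudometric with $\rho=1$) does not formally guarantee --- but this matches the level of rigor of the original proof.
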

\begin{proof}
Let us assume $\STUBB \leq 1/2$.
Let $i$ be any agent and $\xx$ be any state. 
Let $\x$ be any strategy in $\Z\setminus\{\aggregation_i(\xx_{-i})\}$ (recall that $|\Z| \geq 2$). Since $\aggregation_i(\xx_{-i}) \in Z$, we have
	\begin{align} 
	c_i\big([\xx_{-i},\aggregation_i(\xx_{-i})]\big) &= \STUBB\d\big(\aggregation_i(\xx_{-i}), \s_i\big) + (1-\STUBB)\d\big(\aggregation_i(\xx_{-i}), \aggregation_i(\xx_{-i})\big) \nonumber\\
	&= \STUBB\d\big(\aggregation_i(\xx_{-i}), \s_i\big) \nonumber\\  
	&\leq  \STUBB\d\big(\x, \s_i\big) + \STUBB\d\big(\x, \aggregation_i(\xx_{-i})\big) \label{br:eq:step1}\\
	&\leq  \STUBB\d\big(\x, \s_i\big) + (1 - \STUBB)\d\big(\x, \aggregation_i(\xx_{-i})\big) \label{br:eq:step2}\\ 
	&=   c_i\big([\xx_{-i}, \x]\big), \nonumber
	\end{align}
	where  \eqref{br:eq:step1} follows from the triangle inequality and \eqref{br:eq:step2} from $\STUBB \leq 1/2$, which implies that $\STUBB \leq (1 - \STUBB)$. If $\STUBB < 1/2$, \eqref{br:eq:step2} is strict. Hence, $\aggregation_i(\xx_{-i})$ is the unique best response of $i$.
\end{proof}

%\begin{proof}
%Let $i$ be any agent, $\xx$ any state, and $\x$ any strategy in $\Z\setminus\{\aggregation_i(\xx_{-i})\}$. Using the facts that $\aggregation_i(\xx_{-i}) \in Z$ and $\d(\aggregation_i(\xx_{-i}), \aggregation_i(\xx_{-i})) = 0$, we obtain that:
%%
%	\begin{align} 
%	&\quad c_i\big([\xx_{-i},\aggregation_i(\xx_{-i})]\big) \nonumber\\ 
%	&= \STUBB\d\big(\aggregation_i(\xx_{-i}), \s_i\big) \nonumber\\
%	&\leq  \STUBB\d\big(\x, \s_i\big) + \STUBB\d\big(\x, \aggregation_i(\xx_{-i})\big)\label{br:eq:step1}\\
%	&\leq  \STUBB\d\big(\x, \s_i\big) + (1 - \STUBB)\d\big(\x, \aggregation_i(\xx_{-i})\big) \label{br:eq:step2}\\ 
%	&=   c_i\big([\xx_{-i}, \x]\big), \nonumber
%	\end{align}
%%
%	where  \eqref{br:eq:step1} follows from the triangle inequality and \eqref{br:eq:step2} from $\STUBB \leq 1/2$. If $\STUBB < 1/2$, \eqref{br:eq:step2} is strict. 
%\end{proof}

\begin{lemma}\label{lem:dominant-strategy}
If $\STUBB \geq 1/2$ (resp. $\STUBB > 1/2$), $\s_i$ is a (resp. the unique) dominant strategy of any agent $i \in N_S$. 
\end{lemma}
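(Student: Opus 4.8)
The plan is to mirror the structure of the proof of Lemma~\ref{lem:BR}, but now arguing that agent $i$'s preferred strategy $\s_i$ is a best response \emph{regardless} of the opponents' profile $\xx_{-i}$, which is exactly what ``dominant strategy'' requires. Fix an agent $i \in N_S$, so that $\s_i \in \Z$ and hence $\s_i$ is an admissible strategy. Fix an arbitrary opponent profile $\xx_{-i}$ and let $a = \aggregation_i(\xx_{-i}) \in \Z$ be the aggregate that $i$ faces. I want to show that for every alternative strategy $\y \in \Z$, playing $\s_i$ costs no more than playing $\y$, i.e. $c_i([\xx_{-i},\s_i]) \leq c_i([\xx_{-i},\y])$.

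First I would write out both costs explicitly using \eqref{eq:cost}. Playing $\s_i$ gives $c_i([\xx_{-i},\s_i]) = \STUBB\, \d(\s_i,\s_i) + (1-\STUBB)\,\d(\s_i,a) = (1-\STUBB)\,\d(\s_i,a)$, since $\d(\s_i,\s_i)=0$. Playing $\y$ gives $c_i([\xx_{-i},\y]) = \STUBB\,\d(\y,\s_i) + (1-\STUBB)\,\d(\y,a)$. Crucially, $a = \aggregation_i(\xx_{-i})$ depends only on $\xx_{-i}$ and not on $i$'s own strategy, so $a$ is the same aggregate in both states; this is where I would invoke the defining feature of the model (the aggregate ignores $i$'s own component). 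The desired inequality then reduces to $(1-\STUBB)\,\d(\s_i,a) \leq \STUBB\,\d(\y,\s_i) + (1-\STUBB)\,\d(\y,a)$.

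The key step is the triangle inequality applied to the triple $\s_i, \y, a$, which (since $\d$ is an exact metric here) gives $\d(\s_i,a) \leq \d(\s_i,\y) + \d(\y,a)$. Multiplying by $(1-\STUBB)$ yields $(1-\STUBB)\,\d(\s_i,a) \leq (1-\STUBB)\,\d(\s_i,\y) + (1-\STUBB)\,\d(\y,a)$. To finish I would bound the first term on the right using $\STUBB \geq 1/2 \geq 1-\STUBB$, so that $(1-\STUBB)\,\d(\s_i,\y) \leq \STUBB\,\d(\s_i,\y) = \STUBB\,\d(\y,\s_i)$ by symmetry of $\d$; substituting gives exactly the required bound. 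For the strict case $\STUBB > 1/2$, the inequality $(1-\STUBB)\,\d(\s_i,\y) < \STUBB\,\d(\y,\s_i)$ is strict whenever $\d(\s_i,\y) > 0$, i.e. whenever $\y \neq \s_i$ (here I would also use that $\d$ separates distinct points, which holds for an exact metric); hence $\s_i$ is the unique best response against every $\xx_{-i}$, making it strictly dominant.

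The main obstacle is essentially bookkeeping rather than a deep difficulty: I must be careful that the aggregate $a$ is genuinely independent of $i$'s own choice so that the same $a$ appears in both cost expressions, and I must keep track of which of the two inequalities (triangle inequality versus the $\STUBB \geq 1-\STUBB$ comparison) needs to be strict to conclude uniqueness. I expect the argument to be symmetric to Lemma~\ref{lem:BR}, as the authors themselves remark; the only conceptual novelty is that because $\s_i$ does not depend on $\xx_{-i}$ and $a$ does not depend on $i$'s component, the inequality holds uniformly over all states $\xx_{-i}$, upgrading ``best response'' to ``dominant strategy.''
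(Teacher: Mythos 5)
Your proof is correct and follows essentially the same route as the paper's: write the cost of playing $\s_i$ as $(1-\STUBB)\,\d(\s_i,\aggregation_i(\xx_{-i}))$, apply the exact triangle inequality to the triple $\s_i,\y,\aggregation_i(\xx_{-i})$, and then use $1-\STUBB\leq\STUBB$ (strict when $\STUBB>1/2$). The only difference is cosmetic: you explicitly flag that uniqueness needs $\d(\s_i,\y)>0$ for $\y\neq\s_i$, a point the paper's proof implicitly assumes in exactly the same place.
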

\begin{proof}
Let us assume $\STUBB \geq 1/2$. Let $i$ be any agent in $\N_S$ and $\xx$ be any state with $\xx(i) \neq s_i$ (recall that $|\Z|\geq 2$). Since $s_i \in Z$, we have  
	\begin{align} 
	%&\quad c_i\big([\xx_{-i},\s_i]\big) \nonumber \\  
	c_i\big([\xx_{-i},\s_i]\big) &= \STUBB\d\big(\s_i, \s_i\big) + (1 - \STUBB)\d\big(\s_i, \aggregation_i(\xx_{-i})\big) \nonumber \\  
	&=  (1 - \STUBB)\d\big(\s_i, \aggregation_i(\xx_{-i})\big) \nonumber\\
	&\leq   (1 - \STUBB)\d\big(\x_i, \s_i\big)    + (1 - \STUBB)\d\big(\x_i, \aggregation_i(\xx_{-i})\big)\label{dominant:eq:step1}\\
	&\leq  \STUBB\d\big(\x_i, \s_i\big) + (1 - \STUBB)\d\big(\x_i, \aggregation_i(\xx_{-i})\big) \label{dominant:eq:step2} \\ 
	&=   c_i(\xx), \nonumber
	\end{align}
	where  \eqref{dominant:eq:step1} follows from the triangle inequality and \eqref{dominant:eq:step2} follows from $\STUBB \geq 1/2$, which implies that $(1 - \STUBB) \leq \STUBB$. If $\STUBB > 1/2$, then \eqref{dominant:eq:step2} is strict. Therefore, $\s_i$ is a strictly dominant strategy of agent $i$. 
\end{proof}

Theorems~\ref{thm:existence:<}~and~\ref{thm:existence:>} are consequences of lemmas~\ref{lem:BR}~and~\ref{lem:dominant-strategy}. Characterizing existence of pure Nash equilibria when $\STUBB \geq 1/2$ and the game is (semi-)restricted ($N_S \neq N$) is an interesting direction for further research. 

\begin{theorem}\label{thm:existence:<}
If $\STUBB \leq 1/2$, any consensus $\ee \in \Z^n$ is an equilibrium.  
\end{theorem}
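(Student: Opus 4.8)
The plan is to compose the best-response characterization of Lemma~\ref{lem:BR} with the unanimity property of feasible aggregation functions. Fix any consensus $\ee = (\z, \ldots, \z)$ on some strategy $\z \in \Z$. To show that $\ee$ is an equilibrium, it suffices to verify that, for every agent $i$, the strategy $\ee(i) = \z$ is a best response of $i$ to $\ee_{-i}$; then no agent can strictly reduce her cost by a unilateral deviation.

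First I would observe that, since every coordinate of $\ee$ equals $\z$, dropping the $i$-th coordinate leaves a profile $\ee_{-i}$ that is itself a consensus on $\z$. Unanimity then applies directly and yields $\aggregation_i(\ee_{-i}) = \z$. Next I would invoke Lemma~\ref{lem:BR}: because $\STUBB \leq 1/2$, that lemma guarantees that $\aggregation_i(\ee_{-i})$ is a best response of agent $i$ to $\ee_{-i}$. Combining the two facts, $\z = \aggregation_i(\ee_{-i})$ is a best response of $i$, and since $\ee(i) = \z$ by construction, agent $i$ is already playing a best response in $\ee$. As the agent $i$ was arbitrary, this holds for all agents simultaneously, so $\ee$ is a pure Nash equilibrium.

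There is essentially no substantive obstacle here; the statement is a one-line consequence of the two ingredients above. The only points deserving a moment's care are that removing a single coordinate preserves the consensus structure (immediate) and that for $\STUBB \leq 1/2$ Lemma~\ref{lem:BR} provides only non-strict best-response optimality rather than uniqueness. The latter is exactly what the definition of a pure Nash equilibrium requires, so the non-strict conclusion is entirely sufficient, and no appeal to the $\STUBB < 1/2$ uniqueness clause is needed.
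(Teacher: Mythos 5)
Your proposal is correct and matches the paper's intended argument: the paper states Theorem~\ref{thm:existence:<} as a direct consequence of Lemma~\ref{lem:BR}, and your explicit chain (unanimity gives $\aggregation_i(\ee_{-i}) = \z$, Lemma~\ref{lem:BR} makes that a best response, hence every agent already best-responds) is exactly the missing one-line derivation. Your remarks on non-strict optimality sufficing and on the consensus structure being preserved after dropping a coordinate are both accurate.
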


\begin{theorem}\label{thm:existence:>}
{
If $\STUBB \geq 1/2$ (resp. $\STUBB > 1/2$) and the game is unrestricted ($N_S = N$) 
then $\ee \in \Z^n$ is an equilibrium if (resp. and only if) $\ee(i) = s_i$ for all agents $i \in N$.}
\end{theorem}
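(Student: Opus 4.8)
The plan is to derive both directions of the theorem directly from Lemma~\ref{lem:dominant-strategy}. The key observation is that, since the game is unrestricted, $N_S = N$, so Lemma~\ref{lem:dominant-strategy} applies to \emph{every} agent $i \in \N$: for $\STUBB \geq 1/2$ each $\s_i$ is a dominant strategy of agent $i$, and for $\STUBB > 1/2$ it is the unique (strictly) dominant one. Both directions then follow by instantiating the dominance inequality at the profile $\ee_{-i}$.

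First, for the \emph{if} direction (assuming only $\STUBB \geq 1/2$), suppose $\ee$ satisfies $\ee(i) = \s_i$ for all $i \in \N$. Fix an agent $i$ and note that $[\ee_{-i}, \s_i] = \ee$, since the $i$-component of $\ee$ is already $\s_i$. Applying the (weak) dominance guaranteed by Lemma~\ref{lem:dominant-strategy} with $\xx_{-i} = \ee_{-i}$ gives $c_i(\ee) = c_i([\ee_{-i}, \s_i]) \leq c_i([\ee_{-i}, \y])$ for every $\y \in \Z$. As this holds for all $i$, $\ee$ is a pure Nash equilibrium.

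For the \emph{only if} direction (assuming $\STUBB > 1/2$), I argue by contradiction. Suppose $\ee$ is an equilibrium but $\ee(i) \neq \s_i$ for some agent $i$. Since $\STUBB > 1/2$, Lemma~\ref{lem:dominant-strategy} says $\s_i$ is the unique strictly dominant strategy, so instantiating strict dominance at $\xx_{-i} = \ee_{-i}$ with $\y = \ee(i) \neq \s_i$ yields $c_i([\ee_{-i}, \s_i]) < c_i([\ee_{-i}, \ee(i)]) = c_i(\ee)$. Thus agent $i$ strictly improves by switching to $\s_i$, contradicting that $\ee$ is an equilibrium; hence $\ee(i) = \s_i$ for all $i$.

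Since Lemma~\ref{lem:dominant-strategy} already carries the entire metric/triangle-inequality argument, there is no substantial technical obstacle here; the only point requiring care is to keep the two regimes separate---using weak dominance (and hence only sufficiency) when $\STUBB = 1/2$ is possible, versus strict dominance (yielding the full characterization) when $\STUBB > 1/2$---and to correctly read the universal quantifier over $\xx_{-i}$ in the definition of a dominant strategy so that it may be legitimately instantiated at the fixed profile $\ee_{-i}$.
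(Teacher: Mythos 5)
Your proposal is correct and follows exactly the route the paper intends: the paper states only that Theorem~\ref{thm:existence:>} is a consequence of Lemma~\ref{lem:dominant-strategy}, and your argument spells out precisely that derivation, using weak dominance of $s_i$ for sufficiency when $\STUBB \geq 1/2$ and strict dominance for the converse when $\STUBB > 1/2$, with the unrestricted assumption guaranteeing $s_i \in Z$ so the lemma applies to every agent.
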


%\begin{theorem}\label{thm:existence:=}
%If $\STUBB = 1/2$ and the game is unrestricted  (i.e., $N_S = N$), then $\ss$ and any consensus belong to the set of pure Nash equilibria. 
%\end{theorem}

% We will show an interesting case in Section \ref{sec:voting}.

%Things become more involved when . In this case, in order to guarantee the existence of pure Nash equilibria, we need to make more assumptions on the structure of the game.

 %%%%%%%%%%%%%%%%%%%%%_POA_%%%%%%%%%%%%%%%%%%%%%%%%%%

\section{Price of Anarchy}\label{sec:PoA}
%We next derive bounds on the price of anarchy of general preference games with local aggregation. 

%Due to lack of space, most of the proofs of this section can be found in Section~\ref{sec:app:PoA}.  

%\subsection{Compliant Agents ($\STUBB \leq 1/2$)}
\medskip\noindent{\bf Compliant Agents.}
We first consider the case of compliant agents, where $\alpha \leq 1/2$. 
{If $\alpha = 0$, the price of anarchy is $1$}. %any consensus is an equilibrium and an optimal state, and the price of anarchy is $1$.} 
On the other hand, for $\alpha \in (0, 1/2]$, the price of anarchy is either unbounded or $+\infty$.  

\begin{proposition}\label{prop:poa:0}
If $\STUBB = 0$, $\POASUM = \POAMAX = 1$. 
\end{proposition}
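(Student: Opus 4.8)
The plan is to exploit the fact that when $\alpha = 0$ the cost in \eqref{eq:cost} collapses to the pure disagreement term, $c_i(\zz) = \d(\zz(i), \aggr_i(\zz_{-i}))$, and that every agent can always drive this term to zero. First I would observe that, since $\aggr_i(\zz_{-i})$ depends only on $\zz_{-i}$ and lies in $\Z$, agent $i$ may deviate to the strategy $\aggr_i(\zz_{-i})$, obtaining cost $\d(\aggr_i(\zz_{-i}), \aggr_i(\zz_{-i})) = 0$ by property (i) of a $\rho$-approximate metric. Hence the minimum possible cost of any agent, in any state, is $0$; this is exactly the content of Lemma~\ref{lem:BR} specialized to $\alpha = 0$, but the direct argument needs no triangle inequality and so holds even for approximate metrics. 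In particular, any consensus state has all costs equal to $0$ (by unanimity, $\aggr_i(\zz_{-i}) = z$ in a consensus on $z$), so $\min_{\zz} \SUM(\zz) = 0$ and $\min_{\zz} \MAX(\zz) = 0$; both optima are attained with value $0$.

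Next I would establish $\E = \OSUM$ and $\E = \OMAX$ as equalities of sets of states. For the inclusion $\E \subseteq \OSUM \cap \OMAX$: at any equilibrium $\ee$, agent $i$'s cost is at most that of deviating to $\aggr_i(\ee_{-i})$, which is $0$; since costs are nonnegative, $c_i(\ee) = 0$ for every $i$, whence $\SUM(\ee) = \MAX(\ee) = 0$, matching the optimal value. For the converse $\OSUM \subseteq \E$ (and $\OMAX \subseteq \E$): a state $\oo$ with $\SUM(\oo) = 0$ (equivalently $\MAX(\oo) = 0$) has $c_i(\oo) = 0$ for every agent $i$, because all costs are nonnegative; since $0$ is the global minimum of each agent's cost, no deviation is profitable, so $\oo$ is an equilibrium. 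Note that $\MAX(\oo) = 0$ iff $\SUM(\oo) = 0$ iff all $c_i(\oo) = 0$, so $\OSUM$ and $\OMAX$ coincide as sets of states, and both equal $\E$.

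Finally I would invoke the price-of-anarchy convention for a zero optimum. Since the optimal value of $\SUM$ is $0$ and $\E = \OSUM$, the definition gives $\POASUM = 1$; likewise, since the optimal value of $\MAX$ is $0$ and $\E = \OMAX$, we obtain $\POAMAX = 1$. I do not expect a real obstacle here: the only subtlety is that the statement is not merely ``every equilibrium is optimal'' but the set equality $\E = \OSUM = \OMAX$, which is precisely what the zero-optimum clause of the definition requires; the mechanical point to keep in mind throughout is that $\aggr_i(\zz_{-i})$ is independent of $\zz(i)$, so the deviation achieving cost $0$ is always available.
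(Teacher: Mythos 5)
Your proof is correct and follows essentially the same route as the paper's: both hinge on the observation that for $\STUBB = 0$ each agent can always reach cost $0$ by playing $\aggregation_i(\zz_{-i})$, so every equilibrium agent cost is $0$ (the paper cites Lemma~\ref{lem:BR} for this step, which you re-derive directly). You are merely more explicit than the paper about verifying the set equality $\E = \OSUM = \OMAX$ required by the zero-optimum clause of the price-of-anarchy definition, a detail the paper compresses into ``the claim immediately follows.''
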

\begin{proof}
Let us assume $\STUBB = 0$. 
The cost incurred by any agent $i$ in any state $\xx$ is $\d\big(\xx(i), \aggregation_i(\xx_{-i})\big)$.
By Lemma~\ref{lem:BR}, the cost incurred by any agent at any equilibrium is $0$, from which the claim immediately follows. 
\end{proof}

%\alert{
%\begin{proposition}\label{prop:poa:0}
%If $\STUBB = 0$, $\POASUM = \POAMAX = 1$. 
%\end{proposition}
%}
%\alert{
%\begin{proof}
%	Let us assume $\alpha = 0$. 
%	By Theorem \ref{thm:existence:<} and Theorem \ref{thm:existence:=} every pure Nash equilibrium is a consensus. 
%	Moreover, for every consensus $\xx$ we have $c_i(\xx) = 0$
%	for every agent $i$, which implies $\SUM(\xx) = \MAX(\xx) =0$. 
%	Since the social cost of every state is non-negative, we have that every consensus is also a social optimum (with respect to both $\SUM$ and $\MAX$), from which the claim follows. 
%\end{proof}
%}

\begin{proposition}\label{prop:poa:(0,1/2]} 
For $\STUBB \in (0,1/2]$, if the game is restricted  (${\N_\S} = \emptyset$), 
there exist instances for which both $\POASUM$ and $\POAMAX$ are unbounded. 
\end{proposition}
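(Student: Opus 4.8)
The plan is to build, for each small $\epsilon>0$, a \emph{restricted} instance whose price of anarchy is $\Omega(1/\epsilon)$; letting $\epsilon\to 0$ then drives both $\POASUM$ and $\POAMAX$ to infinity. The guiding intuition is that, since $\alpha\le 1/2$, any consensus is an equilibrium (Theorem~\ref{thm:existence:<}), yet a single consensus strategy cannot be simultaneously close to the preferred strategies of two ``antagonistic'' groups of agents. In contrast, a suitable non-consensus state lets each group settle near its own preference, driving the optimum cost to $0$. First I would fix the metric setting so that Theorem~\ref{thm:existence:<} applies: take $\U=\reals$ with the exact metric $\d(x,y)=|x-y|$, strategy space $\Z=\{0,2\}$, and confidence level $\alpha\in(0,1/2]$.

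Next I would specify the agents and the network. Take $n\ge 4$ and partition the agents into two equal groups $A$ and $B$, putting all influence inside each group, i.e.\ choosing any valid weights with $\w_{ij}=0$ whenever $i,j$ lie in different groups (within a group of size at least $2$ this is always possible). Set the preferred strategy to $\epsilon$ for every $i\in A$ and to $2-\epsilon$ for every $i\in B$, with $\epsilon\in(0,1)$; since neither value lies in $\Z$, the game is restricted ($\N_\S=\emptyset$). By Theorem~\ref{thm:existence:<}, the consensus $\ee$ on $0$ is an equilibrium, and there each $i\in A$ pays only its innate cost $\alpha\epsilon$ while each $i\in B$ pays $\alpha(2-\epsilon)$, so that $\SUM(\ee)=\alpha n$ and $\MAX(\ee)=\alpha(2-\epsilon)$.

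To upper bound the optimum I would evaluate the \emph{split} state $\yy$ in which every $i\in A$ plays $0$ and every $i\in B$ plays $2$. The key step, and the only nontrivial one, is showing that the disagreement cost vanishes in $\yy$ for \emph{every feasible} aggregation function: for $i\in A$, the profiles $\yy$ and the all-$0$ consensus agree on every coordinate of positive weight (the only disagreeing coordinates are in $B$, which carry weight $0$), hence $\ds_i(\yy,\text{all-}0)=0$, and consistency followed by unanimity yields $\aggr_i(\yy_{-i})=0=\yy(i)$; symmetrically $\aggr_i(\yy_{-i})=2=\yy(i)$ for $i\in B$. Thus each agent pays only $\alpha\epsilon$, giving $\SUM(\yy)=\alpha\epsilon\,n$ and $\MAX(\yy)=\alpha\epsilon$, which bound the corresponding optima from above. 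Combining, $\POASUM\ge \SUM(\ee)/\SUM(\yy)=1/\epsilon$ and $\POAMAX\ge (2-\epsilon)/\epsilon$, both diverging as $\epsilon\to 0$. I expect the aggregate computation, namely the use of consistency to discard the zero-weight cross-group coordinates, to be the one point requiring care, since it is exactly what makes the construction work for an arbitrary feasible $\aggr$ and not merely for the median or the mean; the remaining cost evaluations are direct substitutions into \eqref{eq:cost}.
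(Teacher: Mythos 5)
Your proof is correct. The core mechanism is the same as the paper's: Theorem~\ref{thm:existence:<} supplies a consensus equilibrium whose cost stays bounded away from zero, while an alternative state has cost $\Theta(\epsilon)$, so the ratio diverges as $\epsilon \to 0$. The difference lies in the instance and the benchmark. The paper uses a minimal three-point construction ($\U=\{a,b,s\}$, $\Z=\{a,b\}$, all agents sharing the preferred strategy $s$ with $\d(b,s)=\epsilon$) and compares the bad consensus on $a$ against the consensus on $b$; since both states are consensuses, only \emph{unanimity} of $\aggr$ is ever invoked and no structure on the influence weights is needed. You instead split the population into two antagonistic groups and benchmark against a non-consensus split state, which forces you to invoke \emph{consistency} (to discard the zero-weight cross-group coordinates) and to engineer a block-diagonal weight matrix. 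You handle that step correctly, and it is indeed the only delicate point of your construction; the payoff is a mild strengthening (inefficiency persists even when the near-optimal state is not a consensus), at the price of a more elaborate instance. Two small remarks: you should note explicitly that $\SUM(\oo)\geq n\STUBB\epsilon>0$ (immediate here since $\N_\S=\emptyset$ forces every innate cost to be at least $\STUBB\epsilon$), so that $\POASUM$ and $\POAMAX$ are genuine finite ratios bounded below by $1/\epsilon$ and $(2-\epsilon)/\epsilon$ respectively, matching the intended reading of ``unbounded'' over the family of instances; and your restriction $\epsilon\in(0,1)$ is exactly what guarantees both that $\N_\S=\emptyset$ and that $\MAX(\ee)=\STUBB(2-\epsilon)$ is attained by group $B$.
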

\begin{proof}
	Let us assume $\STUBB \in (0,1/2]$.
	Let us consider the set of instances with $\U = \{a, b, \s\}$, $\Z=\{a,b\}$ and $S = \{s\}$, (i.e., $\s_i=s$, for every $i\in\N$), where $a,b,s$ are three distinct elements.
	Notice that ${\N_\S} = \emptyset$.
	%We recall that  $0 < \STUBB \leq 1/2$. 
	%and that $\metric{\U, \d}$ is a metric space.
	Since $a,b,\s$ are three distinct elements, the distance between every two elements of $\U$ is non-negative.  
	Let the distance between $b$ and $s$ be an arbitrarily small positive number, i.e.,  $\d(b,\s) = \epsilon > 0$.
	By Theorem \ref{thm:existence:<}, the state $\ee$ in which every agent choses $a$ is a pure Nash equilibrium. 
	%In fact, for every agent $i\in \N$, $c_i\big((\ee_{-i},b)\big) = \d\big(b, \aggregation_i(\ee_{-i})\big) + \STUBB\d(b,\s) = \d(b,a) + \STUBB\d(b,\s) \geq \STUBB\big(\d(b,a) + \d(b,\s)\big) \geq \STUBB\d(a,\s) = \d(a,a) + \STUBB\d(a,\s) = \d\big(a, \aggregation_i(\ee_{-i})\big) + \STUBB\d(a,\s) = c_i(\ee)$, where the first inequality follows from the hypothesis $\STUBB \in (0,1]$, the second inequality from the triangle inequality, the second and forth equality from the fact that $\aggregation_i(\ee_{-i}) = a$ and the third equality from the fact that $\d(a,a) = 0$.
	Therefore, $\SUM(\ee) = n\STUBB\d(a, \s)$  and $\MAX(\ee) = \STUBB\d(a, \s)$.
	On the other hand, the cost of every agent $i$ in state $\xx$, in which every agent choses $b$, is  $c_i(\xx) =  \STUBB\d(b, \s) + (1-\STUBB)\d\big(b, \aggregation_i(\xx_{-i})\big)  =   \STUBB\d(b, \s) + (1-\STUBB)\d\big(b, b\big)  =  \STUBB\d(b, \s)$, where the second equality follows the fact that $\aggregation_i(\xx_{-i}) = b$  and the third from the fact that $\d(b,b) = 0$.
	Hence $\SUM(\oo) \leq \SUM(\xx) = n \STUBB\d(b, \s) = n \STUBB\epsilon > 0$ and $\MAX(\oo) \leq \MAX(\xx) =  \STUBB\d(b, \s) = \STUBB\epsilon > 0$.
	We can conclude that for this instance $\POASUM = \POAMAX \leq \frac{\d(a, \s)}{\epsilon}$.  
	Since $\epsilon$ is an arbitrarily small positive number, the claim follows.
\end{proof}

By slightly modifying the proof of Proposition~\ref{prop:poa:(0,1/2]} and by letting $b$ and $s$ coincide, i.e., $\d(b, \s) = 0$, we obtain a set of instances for the unrestricted game in which $\SUM(\xx) = \MAX(\xx) = 0$, from which the following proposition immediately follows.

%The proof of Proposition~\ref{prop:poa:(0,1/2]} considers a strategy universe $U = \{a, b, s\}$ with $Z = \{a, b\}$ and $s_i = s$ for all agents $i \in N$. If $d(b, s) = \epsilon$, for some arbitrarily small $\epsilon > 0$, $d(a, b) = 1$ and $d(a, s) = 1+\epsilon$, $d$ is a metric. Then, the consensus on $b$ is optimal, while the consensus on $a$ is an equilibrium. If, instead, we consider $Z = U = \{a, s\}$, with $d(a, s) = 1$ and $s_i = s$ for all agents $i$, the consensus on $s$ is optimal, while the consensus on $a$ is an equilibrium. Hence, we show that: 

\begin{proposition}\label{prop:poa:(0,1/2]:b}
For $\STUBB \in (0,1/2]$, if the game is unrestricted  (${\N_\S} = \N$), 
there exist instances with $\POASUM = \POAMAX=+\infty$. 
\end{proposition}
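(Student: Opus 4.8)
The plan is to recycle the lower-bound instance from the proof of Proposition~\ref{prop:poa:(0,1/2]}, but to collapse the two strategies $b$ and $\s$ into a single one so that the common preferred strategy lies inside $\Z$. Concretely, I would take $\U = \Z = \{a, \s\}$ with $a \neq \s$ two distinct elements, set $\s_i = \s$ for every agent $i$, and fix any positive distance $\d(a,\s) > 0$ (with only two distinct points this trivially yields a valid exact, hence $\rho$-approximate, metric). Because $\s \in \Z$, every agent now has her preferred strategy in the strategy space, so $\N_\S = \N$ and the game is unrestricted, exactly as required.

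For the equilibrium side, I would invoke Theorem~\ref{thm:existence:<}: since $\STUBB \leq 1/2$, the consensus $\ee$ on $a$ is a pure Nash equilibrium. By unanimity $\aggregation_i(\ee_{-i}) = a$, so each agent's cost is $c_i(\ee) = \STUBB\d(a,\s) + (1-\STUBB)\d(a,a) = \STUBB\d(a,\s) > 0$, which gives $\SUM(\ee) = n\STUBB\d(a,\s) > 0$ and $\MAX(\ee) = \STUBB\d(a,\s) > 0$.

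For the optimum side, I would exhibit the consensus $\xx$ on $\s$, which is a legal state precisely because $\s \in \Z$. Again by unanimity $\aggregation_i(\xx_{-i}) = \s$, so $c_i(\xx) = \STUBB\d(\s,\s) + (1-\STUBB)\d(\s,\s) = 0$ for every $i$; hence $\SUM(\xx) = \MAX(\xx) = 0$, and therefore the social optima satisfy $\SUM(\oo) = \MAX(\oo) = 0$.

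Finally I would appeal to the definition of the price of anarchy in the degenerate case of a zero-cost optimum: since the equilibrium $\ee$ has strictly positive social and maximum cost, $\ee \in \E$ but $\ee \notin \OSUM$ and $\ee \notin \OMAX$, so $\E \neq \OSUM$ and $\E \neq \OMAX$, which by definition forces $\POASUM = \POAMAX = +\infty$. I expect no real obstacle here; the two points needing a moment of care are checking that the collapse $b = \s$ genuinely yields an unrestricted game and that the zero-optimum branch of the price-of-anarchy definition is the one that applies. Note that, unlike the restricted case of Proposition~\ref{prop:poa:(0,1/2]}, the optimum is now \emph{exactly} zero, so the conclusion is literally $+\infty$ and no limiting argument in $\epsilon$ is needed.
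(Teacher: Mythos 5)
Your proposal is correct and follows essentially the same route as the paper, which also obtains this instance by collapsing $b$ and $\s$ (setting $\d(b,\s)=0$) in the construction of Proposition~\ref{prop:poa:(0,1/2]}, making the optimum exactly zero and invoking the degenerate branch of the price-of-anarchy definition. Your write-up just spells out the details the paper leaves implicit.
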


%\subsection{Stubborn Agents ($\STUBB > 1/2$)}
\noindent{\bf Stubborn Agents.}
We proceed to the case of stubborn agents, where $\alpha >  1/2$. In Theorem~\ref{thm:poa:sum}~and Theorem~\ref{thm:poa:max}, we show general bounds on the price of anarchy that depend on $\SI$, $\STR$, $\BOU$ and $\STUBB$. The proof of Lemma~\ref{lem:equilibrium} follows from the equilibrium condition, the triangle inequality and the definition of stretch. Lemma~\ref{lem:os} follows from Lemma~\ref{lem:dominant-strategy}. 

\begin{lemma}\label{lem:equilibrium} 
For every agent $i$, equilibrium $\ee$ and state $\zz$, $c_i(\ee) \leq  c_i(\zz)  + \STR_i(1-\STUBB)\ds_i(\zz, \ee)\,.$ 
%\\ \centerline{$c_i(\ee) \leq  c_i(\zz)  + \STR_i(1-\STUBB)\ds_i(\zz, \ee)\,.$}
\end{lemma}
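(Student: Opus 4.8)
The plan is to apply the equilibrium condition to one carefully chosen unilateral deviation of agent $i$: the deviation in which $i$ switches from her equilibrium strategy $\ee(i)$ to the strategy $\zz(i)$ she plays in $\zz$ (note $\zz(i)\in\Z$, so this is a legal move). Since $\ee$ is an equilibrium, $c_i(\ee)\leq c_i([\ee_{-i},\zz(i)])$. The key observation is that in the deviated state $[\ee_{-i},\zz(i)]$ only the strategy of $i$ has changed, so the aggregate seen by $i$ is unchanged, i.e.\ $\aggr_i([\ee_{-i},\zz(i)]_{-i})=\aggr_i(\ee_{-i})$. Expanding the cost via \eqref{eq:cost} then gives $c_i([\ee_{-i},\zz(i)])=\STUBB\,\d(\s_i,\zz(i))+(1-\STUBB)\,\d(\zz(i),\aggr_i(\ee_{-i}))$.

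Next I would compare this with $c_i(\zz)=\STUBB\,\d(\s_i,\zz(i))+(1-\STUBB)\,\d(\zz(i),\aggr_i(\zz_{-i}))$. Both states assign $\zz(i)$ to agent $i$, so the innate terms $\STUBB\,\d(\s_i,\zz(i))$ cancel and only the disagreement terms differ. The triangle inequality applied to the three points $\zz(i)$, $\aggr_i(\zz_{-i})$ and $\aggr_i(\ee_{-i})$ yields $\d(\zz(i),\aggr_i(\ee_{-i}))\leq \d(\zz(i),\aggr_i(\zz_{-i}))+\d(\aggr_i(\zz_{-i}),\aggr_i(\ee_{-i}))$. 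Substituting, I obtain $c_i([\ee_{-i},\zz(i)])\leq c_i(\zz)+(1-\STUBB)\,\d(\aggr_i(\ee_{-i}),\aggr_i(\zz_{-i}))$.

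It remains to control the aggregate discrepancy $\d(\aggr_i(\ee_{-i}),\aggr_i(\zz_{-i}))$ by the stretch. If $\ds_i(\zz,\ee)>0$, then since $\ee$ is an equilibrium the defining inequality of $\hat\STR_i$ (hence of $\STR_i$) gives directly $\d(\aggr_i(\ee_{-i}),\aggr_i(\zz_{-i}))\leq \STR_i\,\ds_i(\ee,\zz)$, and $\ds_i(\ee,\zz)=\ds_i(\zz,\ee)$ because $\d$ is symmetric. If instead $\ds_i(\zz,\ee)=0$, consistency of the feasible aggregation function forces $\aggr_i(\ee_{-i})=\aggr_i(\zz_{-i})$, so the discrepancy is $0$ and the bound $\leq\STR_i\,\ds_i(\zz,\ee)=0$ holds trivially; this is precisely the degenerate case for which $\STR_i$ was set to $\max\{\hat\STR_i,1\}$. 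Chaining the three displayed inequalities gives $c_i(\ee)\leq c_i(\zz)+\STR_i(1-\STUBB)\,\ds_i(\zz,\ee)$, as claimed.

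The derivation is almost entirely mechanical; the only point demanding care is the treatment of the stretch. I expect the main (minor) obstacle to be the boundary case $\ds_i(\zz,\ee)=0$, where the ratio in the definition of $\hat\STR_i$ is undefined and one must instead invoke consistency together with the convention $\STR_i\geq 1$. A secondary point is simply to note that the stretch is defined with an equilibrium in its first argument and an arbitrary state in its second, which is exactly the configuration $(\ee,\zz)$ arising here, so the definition applies verbatim.
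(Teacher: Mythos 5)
Your proof is correct and follows exactly the same route as the paper's: apply the equilibrium condition to the deviation $[\ee_{-i},\zz(i)]$, insert $\aggr_i(\zz_{-i})$ via the triangle inequality, and bound the aggregate discrepancy by $\STR_i\,\ds_i(\zz,\ee)$. Your explicit handling of the degenerate case $\ds_i(\zz,\ee)=0$ via consistency is a welcome bit of extra care that the paper delegates to the convention $\STR_i=\max\{\hat{\STR}_i,1\}$, but the argument is otherwise identical.
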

\begin{proof} Using that $\ee$ is an equilibrium state, we have 
	\begin{align*}	
		c_i(\ee) &\leq   \STUBB\d\big(\zz(i), \s_i\big) + (1-\STUBB)\d\big(\zz(i), \aggregation_i(\ee_{-i})\big) \\
		&\leq   \STUBB\d\big(\zz(i), \s_i\big) + (1-\STUBB)\d\big(\zz(i), \aggregation_i(\zz_{-i})\big) + (1-\STUBB)\d\big(\aggregation_i(\zz_{-i}), \aggregation_i(\ee_{-i})\big) \\
		&=   c_i(\zz)  + (1-\STUBB)\d\big(\aggregation_i(\zz_{-i}), \aggregation_i(\ee_{-i})\big) \\
		&\leq    c_i(\zz)  + \STR_i(1-\STUBB)\ds_i(\zz, \ee)\,, 
	\end{align*}
	where the first inequality follows from the equilibrium condition, the second from the triangle inequality, and the last from the definition of stretch. Notably, the proof only requires that the restriction of $d$ to $Z$ satisfies the triangle inequality. 
	\end{proof}

\begin{lemma}\label{lem:os}
If $\STUBB > 1/2$, every equilibrium $\ee$ and optimal state $\oo \in \OSUM$ with $\D(\oo,\ee) \neq \emptyset$, we have $\D(\oo,\ee) \subseteq \D(\oo,\ss)$, where $\ss = (\s_i, \s_2, \ldots, \s_n)$.
\end{lemma}
\begin{proof}
Let us assume $\STUBB > 1/2$.
Let $i$ be any agent in $\D(\oo,\ee)$. 
We prove the claim by showing that $i$ belongs also to $\D(\oo,\ss)$.
If $i \in {\N_\S}$ then, by Theorem \ref{thm:existence:>}, $\ee(i) = \s_i$. 
Since $\oo(i) \neq \ee(i)$, this implies that $\oo(i) \neq \s_i$, i.e., $i\in \D(\oo,\ss)$.
If $i \in {\N\setminus \N_\S}$ then $\s_i$ does not belongs to $\Z$. 
Since $\oo(i)$ belongs to $\Z$, this trivially implies that $\oo(i) \neq \s_i$, i.e., $i\in \D(\oo,\ss)$.
%By the way of contradiction, let us assume $\D(\oo, \ss) = \emptyset$, i.e., $\o_i = \s_i$ (or equivalently $\d(\o_i,\s_i) = 0$) for every $i\in \N$. 
%This means that $\N = \N_S$ and hence, by Lemma \ref{lem:dominant-strategy}, that $\e_i \ = \s_i$ for every $i\in \N$. 
%By transitivity we obtain $\o_i \ = \e_i$ for every $i\in \N$, i.e., $\D(\oo,\ee) = \emptyset$, contradicting the hypothesis.
\end{proof}

\begin{theorem}\label{thm:poa:sum}
If $\STUBB > 1/2$, $\POASUM \leq 1 + \frac{(1-\STUBB)}{\STUBB}\frac{\SI\STR}{\BOU}$\,.
\end{theorem}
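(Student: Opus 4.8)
The plan is to fix an arbitrary equilibrium $\ee \in \E$ and an arbitrary optimal state $\oo \in \OSUM$ and to charge the cost of $\ee$ against that of $\oo$ agent by agent. The natural engine is Lemma~\ref{lem:equilibrium}, instantiated at the state $\zz = \oo$: for each agent $i$ it gives $c_i(\ee) \leq c_i(\oo) + \STR_i(1-\STUBB)\ds_i(\oo,\ee)$. Summing over all $i\in\N$ immediately yields
\[ \SUM(\ee) \;\leq\; \SUM(\oo) + (1-\STUBB)\sum_{i\in\N}\STR_i\,\ds_i(\oo,\ee)\,, \]
so the whole argument reduces to bounding the error term $\sum_{i}\STR_i\,\ds_i(\oo,\ee)$ by a constant multiple of $\SUM(\oo)$.

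The first half of the work is to surface the parameters $\STR$ and $\SI$. I would bound every $\STR_i$ by the global stretch $\STR$ and expand $\ds_i(\oo,\ee) = \sum_{j\neq i}\w_{ij}\,\d(\oo(j),\ee(j))$. Swapping the two summations regroups the terms by $j$, and the coefficient multiplying $\d(\oo(j),\ee(j))$ becomes $\sum_{i\neq j}\w_{ij}$, which (using $\w_{jj}=0$) is precisely the social impact $\SI_j = \sum_{i\in\N}\w_{ij}$ of agent $j$. Bounding $\SI_j \leq \SI$ then gives $\sum_{i}\STR_i\,\ds_i(\oo,\ee) \leq \STR\,\SI\sum_{j}\d(\oo(j),\ee(j))$. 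This is the step where $\SI$ enters; it is clean once one observes that the summation swap converts the influence \emph{received} by $i$ into the influence \emph{exercised} by $j$.

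The second, and in my view the most delicate, half is to bound $\sum_{j}\d(\oo(j),\ee(j))$ by $\SUM(\oo)/(\STUBB\,\BOU)$, which is where the boundary is used. For an agent $j$ with $\oo(j)\neq\ee(j)$, applying the definition~\eqref{def:boundary} of $\BOU_j$ to the strategy $x = \oo(j)$ and the equilibrium $\ee$ gives $\d(\oo(j),\s_j) \geq \BOU_j\,\d(\oo(j),\ee(j)) \geq \BOU\,\d(\oo(j),\ee(j))$, hence $\d(\oo(j),\ee(j)) \leq \d(\s_j,\oo(j))/\BOU$. Since the cost definition~\eqref{eq:cost} gives $c_j(\oo) \geq \STUBB\,\d(\s_j,\oo(j))$, we obtain $\d(\oo(j),\ee(j)) \leq c_j(\oo)/(\STUBB\,\BOU)$; for agents with $\oo(j)=\ee(j)$ the left-hand side vanishes and the inequality is trivial. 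Summing over all agents gives $\sum_{j}\d(\oo(j),\ee(j)) \leq \SUM(\oo)/(\STUBB\,\BOU)$. The care required here is to apply the boundary only at agents where $\oo(j)\neq\ee(j)$ (so that $x\neq\ee(j)$, as the definition demands) and to charge the resulting discrepancy to the \emph{optimal} cost $c_j(\oo)$ rather than the equilibrium cost.

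Combining the three estimates gives $\SUM(\ee) \leq \SUM(\oo)\bigl(1 + \tfrac{1-\STUBB}{\STUBB}\tfrac{\SI\STR}{\BOU}\bigr)$, and dividing by $\SUM(\oo)$ yields the claim whenever $\SUM(\oo)>0$. The remaining degenerate case $\SUM(\oo)=0$ is where Lemma~\ref{lem:os} is convenient: together with the fact that the error bound then forces $\SUM(\ee)=0$ as well (when $\BOU>0$), it lets one conclude that every equilibrium is optimal, so that $\POASUM=1$ and the stated bound holds; if instead $\BOU=0$ the right-hand side is $+\infty$ and the bound is vacuous. The main obstacle is therefore not any single computation but correctly threading the boundary so that each per-agent distance $\d(\oo(j),\ee(j))$ is charged to the optimal cost through $\BOU$ and $\STUBB$, while handling the corner cases $\BOU=0$ and $\SUM(\oo)=0$, where the bound is either vacuous or must be read through the definition of the price of anarchy.
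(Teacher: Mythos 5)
Your proof is correct and follows essentially the same route as the paper's: Lemma~\ref{lem:equilibrium} instantiated at $\zz=\oo$ and summed over agents, the summation swap that converts the influence received by $i$ into the social impact $\SI_j$ of $j$, and the boundary definition applied at $x=\oo(j)$ to charge $\d(\oo(j),\ee(j))$ to the optimal innate cost $\STUBB\,\d(\oo(j),\s_j)\leq c_j(\oo)$. The only cosmetic difference is that you charge per agent and then sum, whereas the paper bounds the ratio of the two aggregate sums directly (invoking Lemma~\ref{lem:os} inside that chain rather than only in the degenerate case); the conclusions coincide.
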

\begin{proof}
If $\E \subseteq \OSUM$, then $\POASUM = 1$. Otherwise, let $\ee$ be any equilibrium and $\oo \in \OSUM$ be any optimal state with $\D(\oo,\ee) \neq \emptyset$. We have
\begin{align}
		%&\quad  \nonumber\\
		\SUM(\ee) &\leq  \sum_{i\in \N}c_i(\oo) +  \sum_{i\in \N}\STR_i(1-\STUBB)\ds_i(\oo, \ee)\label{eq:sum:step1} \\
		&\leq  \sum_{i\in \N}c_i(\oo)  +  \STR(1-\STUBB)\sum_{i\in \N}\sum_{j\in \N\setminus\{i\}}\!\w_{ij}\d\big(\oo(j), \ee(j)\big)\label{eq:sum:step2} \\
		&=  \SUM(\oo)  +  \STR(1-\STUBB)\sum_{j\in \D(\oo,\ee)}\!\SI_{j}\d\big(\oo(j), \ee(j)\big), \label{eq:sum:step3}
	\end{align}
	where \eqref{eq:sum:step1} follows from Lemma \ref{lem:equilibrium},  \eqref{eq:sum:step2} from the definition of $\STR$ %the stretch \eqref{def:stretch} 
	and $\ds_i$,  and \eqref{eq:sum:step3}  from the definitions of $\SUM$, $\D(\oo,\ee)$ and the social impact of $j$.

On the other hand,
		\begin{align}
		%&\quad \SUM(\oo) \nonumber \\ 
	\SUM(\oo)	&\geq  
				\STUBB\sum_{
				i\in\N
				}
				\d\big(\oo(i), \s_i\big)
				\enspace = \enspace  
				\STUBB\!\sum_{
				i\in {\D(\oo,\ss)}
				}\!
				\d(\oo(i), \s_i)\nonumber\\  
				&\geq  
				\STUBB\!\sum_{
					i\in {\D(\oo,\ee)}
					}\!
					\d\big(\oo(i), \s_i\big)\label{opt:sum:step1}\\
				&\geq  
				\STUBB\!\sum_{
					i\in {\D(\oo,\ee)}
					}\!
					\BOU_i\d\big(\oo(i), \ee(i)\big),\label{opt:sum:step2}
	\end{align}
	where \eqref{opt:sum:step1} follows from Lemma \ref{lem:os} and \eqref{opt:sum:step2} from the definition of  boundary \eqref{def:boundary}.
	Notice that the last expression is strictly larger than $0$ because $\STUBB > 1/2$, $\BOU_i > 0$ and $\d\big(\oo(i), \ee(i)\big) > 0$ for every $i\in\D(\oo,\ee) \neq \emptyset$.\\
	Therefore, we can conclude that
	\begin{align}
		%&\quad \POASUM \nonumber \\
\POASUM	&\leq  
		\frac{
			\SUM(\oo) +  \STR(1-\STUBB)\!\!\!\sum\limits_{j\in \D(\oo,\ee)}\!\!\!\SI_{j}\d\big(\oo(j), \ee(j)\big)
			}{
			\SUM(\oo)
			}\label{poa:sum:step1}\\
		&\leq  1 + 
		\frac{
		\STR(1-\STUBB)\!\!\!\sum\limits_{j\in \D(\oo,\ee)}\!\!\!\SI_{j}\d\big(\oo(j), \ee(j)\big)
		}{
		\STUBB\!\!\!\sum\limits_{
					i\in {\D(\oo,\ee)}
					}\!\!\!
					\BOU_i\d\big(\oo(i), \ee(i)\big)
		}\label{poa:sum:step2}\\
		&\leq  1 + \frac{(1-\STUBB)}{\STUBB}\frac{\SI\STR}{\BOU}\label{poa:sum:step3},
	\end{align}
	where \eqref{poa:sum:step1} follows from \eqref{eq:sum:step3}, \eqref{poa:sum:step2} from \eqref{opt:sum:step2} and \eqref{poa:sum:step3} from the definitions of $\SI$ and $\BOU$.
\end{proof}

\iffalse
\begin{proof}[Proof sketch.]
If $\E \subseteq \OSUM$, then $\POASUM = 1$. Otherwise, let $\ee$ be any equilibrium and $\oo \in \OSUM$ be any optimal state with $\D(\oo,\ee) \neq \emptyset$. Then, using Lemma~\ref{lem:equilibrium} and the definitions of stretch $\tau$, $\D(\oo,\ee)$ and $\delta_j$, we get:
\[
 \SUM(\ee) \leq 
 \SUM(\oo)  +  
 \STR(1-\STUBB)\sum_{j\in \D(\oo,\ee)}\!\SI_{j}\d\big(\oo(j), \ee(j)\big)\,.
\]
	
On the other hand, using Lemma~\ref{lem:os} and the definition of  boundary \eqref{def:boundary}, we obtain that: 
\[ 
 \SUM(\oo) \geq \STUBB\!\sum_{i\in {\D(\oo,\ee)}}\!
					\BOU_i\d\big(\oo(i), \ee(i)\big)\,.
\]

The right-hand-side is larger than $0$, because $\STUBB > 1/2$, $\BOU_i > 0$ and $\d\big(\oo(i), \ee(i)\big) > 0$ for all $i\in\D(\oo,\ee) \neq \emptyset$. Taking the ratio of the two bounds above, we conclude:
\begin{align*}
   \POASUM 
		&\leq  1 + 
		\frac{
		\STR(1-\STUBB)\!\!\!\sum\limits_{j\in \D(\oo,\ee)}\!\!\!\SI_{j}\d\big(\oo(j), \ee(j)\big)
		}{
		\STUBB\!\!\!\sum\limits_{
					i\in {\D(\oo,\ee)}
					}\!\!\!
					\BOU_i\d\big(\oo(i), \ee(i)\big)
		}\\
		&\leq  1 + \frac{(1-\STUBB)}{\STUBB}\frac{\SI\STR}{\BOU},
	\end{align*}
where we also use the definitions of $\SI$ and $\BOU$.
\end{proof}
\fi

\begin{theorem}\label{thm:poa:max}
If  $\STUBB > 1/2$, $\POAMAX \leq 1 + \frac{(1-\STUBB)}{\STUBB}\frac{\STR}{\BOU}.$
\end{theorem}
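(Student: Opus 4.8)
The plan is to mirror the structure of the proof of Theorem~\ref{thm:poa:sum}, but working with the maximum cost rather than the sum. First I would dispose of the trivial case: if $\E \subseteq \OMAX$, then $\POAMAX = 1$ and there is nothing to prove. Otherwise, fix any equilibrium $\ee$ and any optimal state $\oo \in \OMAX$ with $\D(\oo,\ee) \neq \emptyset$. The goal is to bound $\MAX(\ee)$ from above in terms of $\MAX(\oo)$. I would pick the agent $i$ that attains the maximum cost at the equilibrium, i.e., $\MAX(\ee) = c_i(\ee)$, and apply Lemma~\ref{lem:equilibrium} to this single agent:
\[
 \MAX(\ee) = c_i(\ee) \leq c_i(\oo) + \STR_i(1-\STUBB)\ds_i(\oo,\ee)\,.
\]

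The key difference from the $\SUM$ proof is that, because we are dealing with a single worst-case agent instead of a sum over all agents, the social impact parameter $\SI$ does not enter. Concretely, I would bound $c_i(\oo) \leq \MAX(\oo)$ and then expand $\ds_i(\oo,\ee) = \sum_{j\neq i}\w_{ij}\d(\oo(j),\ee(j))$; the normalization $\sum_j \w_{ij} = 1$ means this is a convex combination rather than a sum weighted by social impacts, so no factor of $\SI$ appears. The main obstacle, and the step requiring care, is converting the numerator terms $\d(\oo(j),\ee(j))$ into something controlled by $\MAX(\oo)$ via the boundary $\BOU$. Here I would need an analogue of the lower bound \eqref{opt:sum:step2}: using Lemma~\ref{lem:os} (so that disagreement coordinates lie in $\D(\oo,\ss)$) together with the definition of boundary \eqref{def:boundary}, each $\d(\oo(j),\ee(j)) \leq \d(\oo(j),\s_j)/\BOU_j \leq \d(\oo(j),\s_j)/\BOU$, and in turn $\STUBB\,\d(\oo(j),\s_j) \leq c_j(\oo) \leq \MAX(\oo)$.

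Putting these together, I would obtain $\d(\oo(j),\ee(j)) \leq \MAX(\oo)/(\STUBB\,\BOU)$ for each relevant coordinate $j$, and since $\ds_i(\oo,\ee)$ is a convex combination of such terms (the weights $\w_{ij}$ summing to $1$), I get $\ds_i(\oo,\ee) \leq \MAX(\oo)/(\STUBB\,\BOU)$. Substituting into the bound from Lemma~\ref{lem:equilibrium} and using $\STR_i \leq \STR$ gives
\[
 \MAX(\ee) \leq \MAX(\oo) + \STR(1-\STUBB)\frac{\MAX(\oo)}{\STUBB\,\BOU}
 = \MAX(\oo)\left(1 + \frac{(1-\STUBB)}{\STUBB}\frac{\STR}{\BOU}\right)\,.
\]

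Dividing by $\MAX(\oo)$, which is strictly positive by the same argument as in Theorem~\ref{thm:poa:sum} (since $\STUBB > 1/2$, $\BOU_i > 0$ and $\d(\oo(i),\ee(i)) > 0$ on $\D(\oo,\ee) \neq \emptyset$ force a strictly positive innate cost somewhere), yields the claimed bound. The one subtlety to verify is that I may restrict to coordinates $j$ with $\oo(j) \neq \ee(j)$ when applying the boundary definition (terms with $\oo(j) = \ee(j)$ contribute zero to $\ds_i$), and that the worst-case agent $i$ attaining $\MAX(\ee)$ is handled uniformly whether or not $i$ itself lies in $\D(\oo,\ee)$; this is exactly where invoking Lemma~\ref{lem:os} coordinate-by-coordinate, rather than globally, keeps the argument clean.
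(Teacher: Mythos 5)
Your proposal is correct and follows essentially the same route as the paper's proof: the trivial case $\E \subseteq \OMAX$, then Lemma~\ref{lem:equilibrium} applied to the single agent attaining $\MAX(\ee)$, the normalization $\sum_j \w_{ij}=1$ to avoid the factor $\SI$, and the chain $\MAX(\oo) \geq c_j(\oo) \geq \STUBB\, \d(\oo(j),\s_j) \geq \STUBB\,\BOU\, \d(\oo(j),\ee(j))$ to control the disagreement terms. The only (immaterial) difference is that the paper first bounds the convex combination $\ds_i(\oo,\ee)$ by its largest term $\d(\oo(j^*),\ee(j^*))$ and applies the boundary bound to $j^*$ alone, whereas you apply it coordinate-by-coordinate and then use convexity; your passing appeal to Lemma~\ref{lem:os} is not actually needed here, since the boundary definition \eqref{def:boundary} applies directly to each $j \in \D(\oo,\ee)$.
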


\begin{proof}
If $\E \subseteq \OMAX$, then trivially $\POAMAX = 1$.

Otherwise, let  $\ee \in \E$ and $\oo \in \OMAX$  be two states such that $\D(\oo,\ee) \neq \emptyset$. 
We recall that $\STUBB > 1/2$.
Let $i$ be one of the agents with maximum cost at equilibrium, i.e., $c_i(\ee) = \MAX(\ee)$. 
We have
\begin{align}
	%&\quad\MAX(\ee) \nonumber \\
\MAX(\ee)	 &\leq  c_i(\oo) +   \STR_i(1-\STUBB)\ds_i(\oo, \ee)\label{eq:max:step1} \\
		&\leq  \MAX(\oo)  +  \STR(1-\STUBB)\!\sum_{j\in \N\setminus\{i\}}\!\w_{ij}\d\big(\oo(j),\ee(j)\big), \label{eq:max:step2}
\end{align}
	where \eqref{eq:max:step1} follows from Lemma \ref{lem:equilibrium} and \eqref{eq:max:step2}  from the definition of $\STR$ and $\ds_i$.
	Notice that, if none of the agents in  $\N\setminus\{i\}$ is in $\D(\oo,\ee)$, then every term in the summation in \eqref{eq:max:step2} is $0$ and the inequality becomes $\MAX(\ee) \leq \MAX(\oo)$, which implies $\POAMAX = 1$.
	Therefore, let us assume that $\{\N\setminus\{i\}\} \cap \D(\oo,\ee) \neq \emptyset$.
	%the existence of a agent in $\D(\oo,\ee)$ which is different from $i$.
	In particular, let  $j^*$ be any agent in such set %$\D(\oo,\ee)$, different from $i$,
	maximizing the distance between the strategy at the optimum and the one at the equilibrium, i.e.,  $j^* \in \arg\max_{\begin{subarray}{l} j \in \D(\oo, \ee)\\ j\neq i\end{subarray}}\d\big(\oo(j), \ee(j)\big)$.
	Then we can write
\begin{align}
		%&\quad\MAX(\ee) \nonumber \\ 
	\MAX(\ee)	&\leq  \MAX(\oo) + \STR(1-\STUBB)\!\sum_{j\in \N\setminus\{i\}}\!\w_{ij}\d\big(\oo({j^*}),\ee({j^*})\big)\label{eq:max:step3} \\
		& =  \MAX(\oo) + \STR(1-\STUBB)\d\big(\oo({j^*}),\ee({j^*})\big),\label{eq:max:step4}
\end{align}
where \eqref{eq:max:step3} follows from \eqref{eq:max:step2} and the definition of $j^*$ and  \eqref{eq:max:step4} from the fact that $\sum_{j\in \N} \w_{ij} = 1$.\\
On the other side,
\begin{equation}\label{opt:max:step1}
	\begin{split}
		\MAX(\oo) 
		&\geq  c_{j^*}(\oo)  \geq  \STUBB\d\big(\oo({j^*}), \s_{j^*}\big) \\
		&\geq  \STUBB\BOU_i\cdot\d\big(\oo({j^*}), \ee({j^*})\big), 
	\end{split}
\end{equation}	
	where the last inequality follows from  the definition of  boundary \eqref{def:boundary}. 
		Notice that the last expression is strictly larger than $0$ because $\STUBB > 1/2$, $\BOU_i > 0$ and $\d\big(\oo({j^*}), \ee({j^*})\big) > 0$.\\	
	Therefore, we can conclude that 
	\begin{align}
		%&\quad\POAMAX\nonumber \\ 
	\POAMAX	&\leq  
		\frac{
		\MAX(\oo) + \STR(1-\STUBB)\d\big(\oo({j^*}),\ee({j^*})\big)
		}{
		\MAX(\oo)
		}\label{poa:max:step1}\\
		&\leq  
		1 + 
		\frac{
		 \STR(1-\STUBB)\d\big(\oo({j^*}),\ee({j^*})\big)
		}{
		\STUBB\BOU_i\cdot\d\big(\oo({j^*}), \ee({j^*})\big)
		} \nonumber \\
		&\leq  1 + \frac{(1-\STUBB)}{\STUBB}\frac{\STR}{\BOU},\label{poa:max:step2}
	\end{align}
	where \eqref{poa:max:step1} follows from \eqref{eq:max:step4}, and \eqref{poa:max:step2} from \eqref{opt:max:step1} and the definition of $\BOU$.
	\end{proof}

The proof of Theorem~\ref{thm:poa:max} is similar to that of Theorem~\ref{thm:poa:sum}. We note that our price of anarchy analysis is general and the bounds of theorems~\ref{thm:poa:sum}~and~\ref{thm:poa:max} directly apply to discrete preference games that satisfy our assumptions on the influence weights $w_{ij}$\,. In this case, $\tau = 1$, because discrete preference games do not use preference aggregation. Assuming that $\alpha \in (1/2, 1]$ and $\sum_{j} w_{ij} = 1$ (but note that the latter does not hold in \cite{ChierichettiKO18}), $\POASUM \leq \delta(2\alpha-1)^{-1}$ and $\POAMAX \leq (2\alpha-1)^{-1}$ (where we use also that $\beta \geq \frac{2\alpha-1}{2\alpha}$, Theorem~\ref{thm:boundary:Umetric}). Moreover, in fully symmetric discrete preference games, where $w_{ij} = w_{ji}$ and $\delta = 1$, we get that $\POASUM \leq (2\alpha - 1)^{-1}$, for all $\alpha \in 
(1/2, 1]$.

%%%%%%%%%%%%%%%%%%%%%_GENERAL_BOUNDS_%%%%%%%%%%%%%%%%%%%%%%%%%%

%\section{General Bounds on {$\BOU$} and {$\STR$}}
\section{Bounds on Boundary and Stretch}
\label{sec:tau_beta}

Next, we show bounds on the boundary and the stretch under general assumptions on the structure of the game. 

\begin{theorem}\label{thm:boundary:Umetric}
If $\STUBB > 1/2$ and $\d$ is an exact metric then $\BOU \geq \frac{2\STUBB-1}{2\STUBB}$.
\end{theorem}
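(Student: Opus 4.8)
The plan is to prove the per-agent bound $\BOU_i \geq \frac{2\STUBB-1}{2\STUBB}$ for every agent $i$; since $\BOU = \min_{j} \BOU_j$, this immediately yields the claim. So fix an agent $i$, an equilibrium $\ee \in \E$, and a strategy $\x \neq \ee(i)$ with $\d(\x, \ee(i)) > 0$, and abbreviate $e := \ee(i)$ and $a := \aggregation_i(\ee_{-i})$. By the definition of the boundary in \eqref{def:boundary}, it suffices to show $\d(\x, \s_i) \geq \frac{2\STUBB-1}{2\STUBB}\,\d(\x, e)$ for each such $\x$ and then take the minimum.

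The first step is to write down the equilibrium condition $c_i(\ee) \leq c_i([\ee_{-i}, \x])$. The crucial observation is that moving agent $i$ from $e$ to $\x$ leaves $\aggregation_i(\ee_{-i}) = a$ unchanged, since the aggregate depends only on $\ee_{-i}$. Expanding the cost \eqref{eq:cost} thus gives
\[ \STUBB\,\d(\s_i, e) + (1-\STUBB)\,\d(e, a) \;\leq\; \STUBB\,\d(\s_i, \x) + (1-\STUBB)\,\d(\x, a)\,. \]
The second step is to eliminate $a$: applying the exact triangle inequality to the triple $\x, e, a$ yields $\d(\x, a) \leq \d(\x, e) + \d(e, a)$, hence $(1-\STUBB)\big(\d(e,a) - \d(\x,a)\big) \geq -(1-\STUBB)\,\d(\x, e)$. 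Substituting into the rearranged equilibrium inequality produces a relation involving only $\s_i, e, \x$, namely $\STUBB\,\d(\s_i, \x) \geq \STUBB\,\d(\s_i, e) - (1-\STUBB)\,\d(\x, e)$. The third step is a second, differently oriented triangle inequality, this time on the triple $\s_i, e, \x$ in the form $\d(\s_i, e) \geq \d(\x, e) - \d(\s_i, \x)$. Plugging this in and collecting the $\d(\s_i, \x)$ terms gives $2\STUBB\,\d(\s_i, \x) \geq (2\STUBB - 1)\,\d(\x, e)$, which is exactly the desired ratio after dividing by $2\STUBB\,\d(\x, e) > 0$. Minimizing over all equilibria $\ee$ and all admissible $\x$ yields $\BOU_i \geq \frac{2\STUBB-1}{2\STUBB}$, and minimizing over $i$ completes the argument.

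I expect the only real subtlety to be the coordinated choice of the two triangle inequalities: the first must be applied to $\x, e, a$ so that the dependence on the aggregate $a$ cancels with the correct sign, while the second must be applied to $\s_i, e, \x$ so that $\d(\s_i, e)$ is traded for $\d(\x, e)$ and $\d(\s_i, \x)$, allowing the two $\d(\s_i,\x)$ contributions to combine on one side. It is worth emphasizing that exactness of $\d$ is used in both applications; with a genuinely $\rho$-approximate metric the extra factors of $\rho$ would weaken the bound, which is consistent with the theorem's hypothesis that $\d$ is an exact metric.
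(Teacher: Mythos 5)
Your proposal is correct and follows essentially the same route as the paper's proof: apply the equilibrium condition with deviation $\x$, use the triangle inequality on $\x, \ee(i), \aggregation_i(\ee_{-i})$ to eliminate the aggregate, then use a second triangle inequality on $\s_i, \ee(i), \x$ and collect the two $\d(\s_i,\x)$ terms. The only difference is cosmetic (you keep the factor $\STUBB$ on both sides rather than dividing through first), so there is nothing to add.
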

\begin{proof} 
For any equilibrium $\ee\in \E$, any agent $i$, and any strategy $\x \neq \ee(i)$ (recall that $|\Z| \geq 2$), we apply the equilibrium condition and obtain that 
	\begin{equation}\label{lb1_g:br}
		\STUBB\d\big(\ee(i), \s_i\big) + (1-\STUBB)\d\big(\ee(i), \aggregation_i(\ee_{-i})\big) \leq  \STUBB\d(\x, \s_i) + (1-\STUBB)\d\big(\x, \aggregation_i(\ee_{-i})\big)\,.
	\end{equation}
	
	Therefore,
	\begin{align*}
%		&\quad\d(\x, \s_i) \\
		 \d(\x, \s_i)  &\geq   \d\big(\ee(i), \s_i\big) - \frac{(1-\STUBB)}{\STUBB}\Big[\d\big(\x, \aggregation_i(\ee_{-i})\big)  - \d\big(\ee(i), \aggregation_i(\ee_{-i})\big)\Big] \\
		&\geq  \d\big(\ee(i), \s_i\big) - \frac{(1-\STUBB)}{\STUBB}\d\big(\x,\ee(i)\big) \\
		&\geq  \Big[\d\big(\x,\ee(i)\big) - \d(\x,\s_i)\Big] - \frac{(1-\STUBB)}{\STUBB}\d\big(\x,\ee(i)\big) \\
		&\geq   \left(\frac{2\STUBB - 1}{\STUBB}\right)\d\big(\x,\ee(i)\big)  - \d(\x,\s_i),
	\end{align*}
	where the first inequality follows from \eqref{lb1_g:br} and the remaining inequalities from the triangle inequality. The theorem follows by adding $\d(\x,\s_i)$ to each side of the previous inequality and dividing by $2$. 
	\end{proof}

\begin{proposition}\label{prop:boundary:Ns=N}
For $\STUBB > 1/2$, if either the game is unrestricted ($\N_\S = \N$) or {$\d$ is uniform} then $\BOU = 1$.
\end{proposition}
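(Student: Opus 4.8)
The plan is to show that under either hypothesis every ratio $\d(\x,\s_i)/\d(\x,\ee(i))$ appearing in the definition \eqref{def:boundary} of $\BOU_i$ has numerator equal to denominator, so that each $\BOU_i$ is exactly $1$ and hence $\BOU=\min_j \BOU_j = 1$. The single ingredient driving everything is the equilibrium structure for stubborn agents: since $\STUBB>1/2$, Lemma~\ref{lem:dominant-strategy} makes $\s_i$ the strictly dominant strategy of every $i\in\N_\S$, so $\ee(i)=\s_i$ at \emph{every} equilibrium $\ee\in\E$ for such agents. I would record this observation first, as it is used in both cases.

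For the unrestricted case ($\N_\S=\N$) this essentially finishes the argument. Theorem~\ref{thm:existence:>} identifies the unique equilibrium as $\ee=\ss$, i.e. $\ee(i)=\s_i$ for all $i$. Substituting $\s_i=\ee(i)$ into \eqref{def:boundary}, each admissible $\x\neq\ee(i)$ gives $\d(\x,\s_i)/\d(\x,\ee(i))=\d(\x,\ee(i))/\d(\x,\ee(i))=1$ (there is such an $\x$ since $|\Z|\geq 2$, and the ratio is well defined for the strategies realizing the minimum). Hence $\BOU_i=1$ for every $i$, and $\BOU=1$.

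For the uniform case I would split on membership in $\N_\S$. If $i\in\N_\S$, the dominance observation gives $\ee(i)=\s_i$, so for every $\x\neq\ee(i)=\s_i$ uniformity yields $\d(\x,\s_i)=\d(\x,\ee(i))=1$ and the ratio is $1$. If $i\notin\N_\S$, then $\s_i\in\U\setminus\Z$, so every admissible $\x\in\Z$ automatically satisfies $\x\neq\s_i$; uniformity then gives $\d(\x,\s_i)=1$ and $\d(\x,\ee(i))=1$ for $\x\neq\ee(i)$, so again each ratio equals $1$. In all three subcases $\BOU_i=1$, whence $\BOU=\min_j\BOU_j=1$.

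The step I would flag as the main pitfall is the uniform case with $i\in\N_\S$. There the strategy $\x=\s_i$ genuinely lies in $\Z$, and it would make the numerator $\d(\x,\s_i)=0$, collapsing the ratio and forcing $\BOU_i=0$ rather than $1$. This is precisely why the hypothesis $\STUBB>1/2$ is indispensable: Lemma~\ref{lem:dominant-strategy} pins $\ee(i)$ to $\s_i$, so the dangerous choice $\x=\s_i$ coincides with $\ee(i)$ and is excluded by the constraint $\x\neq\ee(i)$ in \eqref{def:boundary}. Once this exclusion is spelled out, the three subcases combine to give $\BOU=1$.
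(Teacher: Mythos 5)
Your proof is correct and follows essentially the same route as the paper's: both cases reduce to showing $\d(\x,\s_i)=\d(\x,\ee(i))$ for every admissible $\x\neq\ee(i)$, using the fact that stubborn agents in $\N_\S$ play $\s_i$ at every equilibrium. The only (cosmetic) difference is that you handle the uniform case by a direct split on $i\in\N_\S$ versus $i\notin\N_\S$ and cite Lemma~\ref{lem:dominant-strategy}, whereas the paper argues by contradiction and invokes Theorem~\ref{thm:existence:>}; your citation is arguably the cleaner one, since the lemma applies per-agent without assuming the game is unrestricted.
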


\begin{proof}
Let us first assume ${\N_\S} = \N$. 
Let us consider any equilibrium $\ee\in \E$, any agent $i$ and any strategy $\x \neq \ee(i)$ (recall that $|\Z| \geq 2$).
In order to prove the claim, we need to show that $\d(\x,\s_i) = \d\big(\x,\ee(i)\big)$.
If $\STUBB > 1/2$, by Theorem \ref{thm:existence:>}, we get $\ee(i) = \s_i$, which trivially implies $\d(\x, \s_i) = \d\big(\x, \ee(i)\big)$.

Now, let us assume that $\d$ is uniform. 
Let us consider any equilibrium $\ee\in \E$, any agent $i$ and any strategy $\x \neq \ee(i)$ (recall that $|\Z| \geq 2$).
In order to prove the claim, we need to show that $\d(\x,\s_i) = \d\big(\x,\ee(i)\big)$.
Notice that, since $\x \neq \ee(i)$, we have $\d\big(\x,\ee(i)\big) = 1$.
Therefore, in order for the claim to hold, it must be that $\d(\x,\s_i) = 1$.
By contradiction, let us assume that $\d(\x,\s_i) = 0$, or equivalently $\x = \s_i$. 
This implies that $i$ belongs to $\N_\S$. 
If $\STUBB > 1/2$, by Theorem \ref{thm:existence:>}, we get $\s_i = \ee(i)$, and by transitivity $\x = \ee(i)$, i.e., $\d\big(\x,\ee(i)\big) = 0$, contradicting the hypothesis.
\end{proof}

%\begin{proposition}\label{prop:boundary:uniform}
%For all $\STUBB \in (1/2, 1]$,  if $\metric{\U, \d}$ is the uniform metric space, then $\BOU = 1$.
%\end{proposition}
%\begin{proof}
%Let us consider any equilibrium $\ee\in \E$, any agent $i$ and any strategy $\x \neq \ee(i)$ (recall that $|\Z| \geq 2$).
%In order to prove the claim, we need to show that $\d(\x,\s_i) = \d\big(\x,\ee(i)\big)$.
%Notice that, since $\x \neq \ee(i)$, we have $\d\big(\x,\ee(i)\big) = 1$.
%Therefore, in order for the claim to hold, it must be that $\d(\x,\s_i) = 1$.
%By contradiction, let us assume that $\d(\x,\s_i) = 0$, or equivalently $\x = \s_i$. 
%This implies that $i$ belongs to $\N_\S$. 
%If we assume $\STUBB > 1/2$, by Theorem \ref{thm:existence:>}, we get $\s_i = \ee(i)$, and by transitivity $\x = \ee(i)$, i.e., $\d\big(\x,\ee(i)\big) = 0$, contradicting the hypothesis.
%\end{proof}

\begin{proposition}\label{prop:general_tau}
The global stretch $\tau \leq \frac{d^{\max}(Z)}{w^{\min} d^{\min}(Z)}$, where $d^{\max}(Z) = \max_{x, y \in Z} d(x, y)$ is the diameter of $Z$, $d^{\min}(Z) = \min_{x \neq y: d(x, y) > 0} d(x, y)$ is the minimum positive distance in $Z$, and $w^{\min} = \min_{i \neq j: w_{ij} > 0} w_{ij}$ is the minimum positive influence weight. 
\end{proposition}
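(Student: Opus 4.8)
The plan is to bound, term by term, the ratio that defines the stretch of each agent. Recall that $\STR = \max_{j\in\N}\STR_j$ with $\STR_j = \max\{\hat\tau_j, 1\}$, so it suffices to show that for every agent $i$ and every admissible pair $\ee \in \E$, $\yy \in \Z^n$ with $\ds_i(\ee,\yy) > 0$, the ratio
\[
\frac{\d\big(\aggr_i(\ee_{-i}),\aggr_i(\yy_{-i})\big)}{\ds_i(\ee,\yy)}
\]
is at most $\frac{d^{\max}(Z)}{w^{\min} d^{\min}(Z)}$, and then to check that this quantity is itself at least $1$ (so that the outer $\max$ with $1$ in the definition of $\STR_j$ is harmless). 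This is exactly the form in which $\STR_i$ is used as an upper bound on such ratios in Lemma~\ref{lem:equilibrium}.

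First I would bound the numerator from above. Because the aggregation function always returns a strategy in $\Z$, both $\aggr_i(\ee_{-i})$ and $\aggr_i(\yy_{-i})$ lie in $\Z$, so their distance is at most the diameter $d^{\max}(Z)$ by definition of $d^{\max}(Z)$.

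Next I would bound the denominator from below using the admissibility hypothesis. Since $\ds_i(\ee,\yy) = \sum_{j\neq i}\w_{ij}\d\big(\ee(j),\yy(j)\big) > 0$, there is at least one index $j \neq i$ with simultaneously $\w_{ij} > 0$ and $\ee(j) \neq \yy(j)$. For that single index, $\w_{ij} \geq w^{\min}$ by definition of the minimum positive influence weight, and, since $\ee(j),\yy(j)\in\Z$ are distinct, $\d\big(\ee(j),\yy(j)\big) \geq d^{\min}(Z)$. Dropping all remaining (nonnegative) summands then gives $\ds_i(\ee,\yy) \geq w^{\min} d^{\min}(Z)$.

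Combining the two bounds yields the claimed ratio bound for every agent and every admissible pair, hence $\hat\tau_i \leq \frac{d^{\max}(Z)}{w^{\min} d^{\min}(Z)}$. To conclude, I would note that $d^{\max}(Z) \geq d^{\min}(Z)$ and $w^{\min} \leq 1$ force $\frac{d^{\max}(Z)}{w^{\min} d^{\min}(Z)} \geq 1$, so $\STR_i = \max\{\hat\tau_i,1\}$ satisfies the same bound, and taking the maximum over $i$ gives the proposition. There is no genuine obstacle here; the only point requiring care is reading off the correct direction of each inequality from the definition of $\hat\tau_i$ and recognizing that the condition $\ds_i(\ee,\yy) > 0$ is precisely what guarantees the one strictly positive summand needed for the denominator bound.
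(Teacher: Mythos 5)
Your proof is correct and follows essentially the same route as the paper's: bound the numerator by the diameter $d^{\max}(Z)$ since both aggregates lie in $Z$, and bound the denominator below by $w^{\min} d^{\min}(Z)$ via the single positive summand guaranteed by $\ds_i(\ee,\yy)>0$. The only cosmetic difference is that you additionally verify the bound is at least $1$ to absorb the $\max\{\hat\tau_i,1\}$, whereas the paper disposes of the $\tau_i=1$ case separately at the outset.
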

\begin{proof}
In the definition of $\tau_i$, we have $\tau_i = 1$, if $\ds_i(\vec{e}, \vec{y}) = d(\aggr_i(\vec{e}_{-i}), \aggr_i(\vec{y}_{-i})) = 0$. Hence, we assume that $\ds_i(\vec{e}, \vec{y}) > 0$. 

We let $w^{\min}(i) = \min_{j \neq i: w_{ij} > 0} w_{ij}$\,. Then, $\ds_i(\vec{e}, \vec{y}) = \sum_{j \neq i} w_{ij} d(\vec{e}(j), \vec{y}(j)) \geq w^{\min}(i)\, d^{\min}(Z)$, because $\vec{e}(j), \vec{y}(j) \in Z$, there is at least one positive term in the sum, and if either $w_{ij} = 0$ or $d(\vec{e}(j), \vec{y}(j)) = 0$, the corresponding term  is $0$. Moreover, $d(\aggr_i(\vec{e}_{-i}), \aggr_i(\vec{y}_{-i})) \leq d^{\max}(Z)$, because $\aggr_i(\vec{e}_{-i}), \aggr_i(\vec{y}_{-i}) \in Z$. Therefore, 
\[ \tau_i \leq \frac{d^{\max}(Z)}{w^{\min}(i)\,d^{\min}(Z)}. \]
Using that the global stretch $\tau = \max_{i \in N} \tau_i$ and that $w^{\min} = \min_{i \in N} w^{\min}(i)$, we conclude the proof of the proposition. 
\end{proof}

Unless we impose additional structure, the upper bound of Proposition~\ref{prop:general_tau} is essentially the best possible, because there are examples where a small change in a single coordinate of a state moves the aggregate to a diametrically different strategy. E.g., consider $Z = \{ 0, 1 \}$, $\vec{e}_{-i} = (0, 1, 1)$, $\vec{y}_{-i} = (0, 0, 1)$, $w_{i1} = w_{i3} = \frac{1-\varepsilon}{2}$ and $w_{i2} = \varepsilon$, and the Fr\'{e}chet median as aggregation function. Clearly, $\aggr_i(\vec{e}_{-i}) = 1$,  while $\aggr_i(\vec{y}_{-i}) = 0$. 

For the following propositions, we restrict the definition of $\STR$ to optimal states $\vec{o}$ (either $\vec{o} \in \OSUM$ or $\vec{o} \in \OMAX$), instead of arbitrary states $\vec{y}$.

\begin{proposition}\label{prop:strech:o-consensus}
If every social optimum (with respect to any social objective) is a consensus, $\d$ is an exact metric on $\Z$ and the aggregation function is the Fr\'{e}chet median then $\STR \leq 2$.  
\end{proposition}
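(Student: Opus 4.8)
The plan is to bound each $\STR_i$ directly from the restricted definition of stretch, in which the second state ranges only over social optima $\vec{o}$. Fix an equilibrium $\vec{e} \in \E$ and an optimal state $\vec{o}$ with $\ds_i(\vec{e}, \vec{o}) > 0$, and write $m = \aggr_i(\vec{e}_{-i})$ for the Fr\'echet median. By hypothesis every social optimum (under either objective) is a consensus, so $\vec{o} = (o, \ldots, o)$ for some $o \in \Z$; in particular $\vec{o}_{-i}$ is a consensus on $o$, and unanimity of the aggregation function gives $\aggr_i(\vec{o}_{-i}) = o$. Hence the numerator of the stretch ratio simplifies to $\d(m, o)$, while the denominator is $\ds_i(\vec{e}, \vec{o}) = \sum_{j \neq i} w_{ij}\, \d(\vec{e}(j), o)$.

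The key step exploits the optimality of the Fr\'echet median. Since $o \in \Z$ is a feasible candidate in the minimization \eqref{def:median}, the median $m$ satisfies $\sum_{j \neq i} w_{ij}\, \d(m, \vec{e}(j)) \leq \sum_{j \neq i} w_{ij}\, \d(o, \vec{e}(j)) = \ds_i(\vec{e}, \vec{o})$. Next I would average the triangle inequality $\d(m, o) \leq \d(m, \vec{e}(j)) + \d(\vec{e}(j), o)$ against the weights $w_{ij}$. Because $w_{ii} = 0$ and the weights are normalized, $\sum_{j \neq i} w_{ij} = 1$, so the weighted average of the constant $\d(m, o)$ equals $\d(m, o)$ itself, whereas the right-hand side splits into the two weighted sums above, each bounded by $\ds_i(\vec{e}, \vec{o})$. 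This yields $\d(m, o) \leq 2\, \ds_i(\vec{e}, \vec{o})$, so the stretch ratio is at most $2$.

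Since this bound holds for every admissible pair $\vec{e}, \vec{o}$, we obtain $\hat{\STR}_i \leq 2$, and as $2 \geq 1$ it follows that $\STR_i = \max\{\hat{\STR}_i, 1\} \leq 2$ for every agent $i$; maximizing over $i$ gives $\STR \leq 2$. I expect no serious obstacle here: the argument is just median optimality combined with a weighted triangle inequality. The only points needing care are that the triangle inequality is invoked solely among strategies in $\Z$ (legitimate, since $\d$ is an exact metric on $\Z$ and $m, o, \vec{e}(j)$ all lie in $\Z$) and that the normalization $\sum_{j \neq i} w_{ij} = 1$ is precisely what converts the weighted average of $\d(m, o)$ back into $\d(m, o)$.
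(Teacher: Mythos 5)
Your argument is correct and is essentially the paper's own proof: both combine the Fr\'echet median's optimality inequality $\sum_{j\neq i} w_{ij}\,\d(\aggr_i(\ee_{-i}),\ee(j)) \leq \sum_{j\neq i} w_{ij}\,\d(\aggr_i(\oo_{-i}),\ee(j))$ with the weighted triangle inequality through the points $\ee(j)$ and the normalization $\sum_{j\neq i} w_{ij}=1$, using the consensus/unanimity property to identify $\aggr_i(\oo_{-i})$ with the common optimal strategy. No gaps.
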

\begin{proof}
%We start with the proof of Proposition~\ref{prop:strech:o-consensus}. 
Let us consider any equilibrium $\ee\in \E$, any social optimum $\oo$ (either $\vec{o} \in \OSUM$ or $\vec{o} \in \OMAX$) and any agent $i$. 

%First notice that, since $\aggregation_{i}(\ee_{-i}) \in  {\arg\min}_{\y\in \Z} \sum_{j\in \N\setminus\{i\}}\w_{ij}\d\big(\y, \ee(j)\big)$, it holds  
By the definition of the Fr\'{e}chet mean, 
\begin{equation}\label{eq:tau:init}
		\sum_{j\in \N\setminus\{i\}}\w_{ij}\d\big(\aggregation_{i}(\ee_{-i}), \ee(j)\big) \leq \sum_{j\in \N\setminus\{i\}} \w_{ij}\d\big(\aggregation_{i}(\oo_{-i}), \ee(j)\big).
\end{equation}

Hence, we have  that
	\begin{align}
 \d\big(\aggregation_{i}(\oo_{-i}),\aggregation_{i}(\ee_{-i})\big)		 & =   \sum_{j\in \N\setminus\{i\}} \w_{ij} \d\big(\aggregation_{i}(\oo_{-i}),\aggregation_{i}(\ee_{-i})\big) \label{eq:tau:step1} \\ 
		 & \leq  \sum_{j\in \N\setminus\{i\}} \Big[\w_{ij}\d\big(\aggregation_{i}(\oo_{-i}), \ee(j)\big)  + \w_{ij}\d\big(\ee(j), \aggregation_{i}(\ee_{-i})\big) \Big] \label{eq:tau:step2}\\
		 & =  \sum_{j\in \N\setminus\{i\}} \w_{ij}\d\big(\aggregation_{i}(\oo_{-i}), \ee(j)\big)  + \sum_{j\in \N\setminus\{i\}}\w_{ij}\d\big(\aggregation_{i}(\ee_{-i}), \ee(j)\big) \nonumber \\	 
		 & \leq  2\sum_{j\in \N\setminus\{i\}} \w_{ij}\d\big(\aggregation_{i}(\oo_{-i}), \ee(j)\big) \label{eq:tau:step3}\\
		 & =  2\sum_{j\in \N\setminus\{i\}}\w_{ij}\d\big(\oo(j),\ee(j)\big) = 2\cdot\ds_i(\oo, \ee), \label{eq:tau:step4}
	\end{align}
	where \eqref{eq:tau:step1} follows from $\sum_{j\in \N\setminus\{i\}}\w_{ij} = 1$, \eqref{eq:tau:step2} from the triangle inequality, \eqref{eq:tau:step3} from   \eqref{eq:tau:init} and \eqref{eq:tau:step4} from the fact that $\oo$ is a consensus.
	%The same argument will apply if we take $\oo \in \OMAX$. 
\end{proof}

The proof of Proposition~\ref{prop:strech:e-consensus} is symmetric to that of Proposition~\ref{prop:strech:o-consensus} and follows if we simply exchange the roles of $\ee$ and $\oo$ in the argument of the proof.

\begin{proposition}\label{prop:strech:e-consensus}
If every equilibrium is a consensus, $\d$ is an exact metric on $\Z$  and the aggregation function is the Fr\'{e}chet median then $\STR \leq 2$.
\end{proposition}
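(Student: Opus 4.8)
The plan is to reuse the argument of Proposition~\ref{prop:strech:o-consensus} almost verbatim, interchanging the roles of $\ee$ and $\oo$ throughout. First I would fix an arbitrary equilibrium $\ee \in \E$, an arbitrary social optimum $\oo$, and an arbitrary agent $i$. Under the present hypothesis \emph{every equilibrium} is a consensus, so by unanimity of the Fr\'{e}chet median $\aggregation_i(\ee_{-i})$ equals the common strategy of $\ee$; equivalently $\aggregation_i(\ee_{-i}) = \ee(j)$ for every $j \neq i$. This is the identity I will invoke at the very end of the chain, exactly where the original proof used that $\oo$ is a consensus.

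The engine of the proof is the optimality of the Fr\'{e}chet median, but now applied at $\oo$ instead of at $\ee$: since $\aggregation_i(\oo_{-i})$ minimizes $\sum_{j\neq i}\w_{ij}\,\d(y,\oo(j))$ over $y\in\Z$, taking $y=\aggregation_i(\ee_{-i})$ gives
\[ \sum_{j\in\N\setminus\{i\}} \w_{ij}\,\d(\aggregation_i(\oo_{-i}), \oo(j)) \leq \sum_{j\in\N\setminus\{i\}} \w_{ij}\,\d(\aggregation_i(\ee_{-i}), \oo(j)). \]
From here I run the same four steps as before. Using $\sum_{j\neq i}\w_{ij}=1$ I write $\d(\aggregation_i(\ee_{-i}),\aggregation_i(\oo_{-i}))$ as a convex combination $\sum_{j\neq i}\w_{ij}\,\d(\aggregation_i(\ee_{-i}),\aggregation_i(\oo_{-i}))$; I insert the intermediate point $\oo(j)$ in each term and apply the triangle inequality (valid since $\d$ is an exact metric on $\Z$); I regroup into the two weighted sums $\sum_{j\neq i}\w_{ij}\,\d(\aggregation_i(\ee_{-i}),\oo(j))$ and $\sum_{j\neq i}\w_{ij}\,\d(\aggregation_i(\oo_{-i}),\oo(j))$, and I bound the second by the first using the displayed optimality inequality, collapsing the total into $2\sum_{j\neq i}\w_{ij}\,\d(\aggregation_i(\ee_{-i}),\oo(j))$. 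Finally I substitute $\aggregation_i(\ee_{-i})=\ee(j)$ (the consensus identity above), obtaining $2\sum_{j\neq i}\w_{ij}\,\d(\ee(j),\oo(j)) = 2\,\ds_i(\oo,\ee)$. Thus $\d(\aggregation_i(\ee_{-i}),\aggregation_i(\oo_{-i})) \leq 2\,\ds_i(\oo,\ee)$, and since $\ee$, $\oo$ and $i$ were arbitrary, $\STR_i \leq 2$ for every $i$, whence $\STR \leq 2$.

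The only genuine point of care --- and the main ``obstacle,'' modest as it is --- is getting the symmetry bookkeeping right: the Fr\'{e}chet median optimality must be invoked at $\oo$ (the state we do \emph{not} assume to be a consensus), while the consensus substitution is applied to $\ee$. No new structural input is needed beyond the triangle inequality on $\Z$ and the symmetry $\ds_i(\ee,\oo)=\ds_i(\oo,\ee)$, so once these two roles are assigned correctly the computation is identical to that of Proposition~\ref{prop:strech:o-consensus}.
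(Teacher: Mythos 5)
Your proof is correct and is precisely the argument the paper intends: it explicitly states that Proposition~\ref{prop:strech:e-consensus} follows from the proof of Proposition~\ref{prop:strech:o-consensus} by exchanging the roles of $\ee$ and $\oo$, and you have carried out that exchange with the roles correctly assigned (median optimality at $\oo$, consensus substitution at $\ee$). No gaps.
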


\section{$k$-Approval Voting Game}
\label{sec:voting}

Next, we introduce a natural preference game with local aggregation,  
motivated by $k$-approval voting with $m > k$ candidates. The strategy space $Z$ consists of all binary strings of length $m$ with $k$ ones. Formally, $Z = \{ \vec{u} \in \{0, 1\}^m\,|\, \sum_{\ell=1}^m \vec{u}(\ell) = k \}$. The strategy universe $U = \{ \vec{u} \in [0, 1]^m\,|\, \sum_{\ell=1}^m \vec{u}(\ell) = k \}$. An agent $i$ can choose any $s_i \in U$ as her preferred strategy. The distance function is $L_2^2$, which is a $2$-approximate metric. Namely, for every $\vec{u}, \vec{v} \in U$, 
\[ d(\vec{u}, \vec{v}) = \sum_{\ell=1}^m (\vec{u}(\ell) - \vec{v}(\ell))^2\,. \]
Note that $L_2^2$ is identical to the Hamming distance (which is a metric distance) when restricted to $Z$. The aggregation function is the Fr\'{e}chet median in $Z$. 

Our approval voting model does not admit a potential function (due to the asymmetry of $w_{ij}$, even for $m = 2$ and $k=1$). Using \cite[Observation~2.2]{FerraioliGV16}, we can show that our approval voting game admits a pure Nash equilibrium for $m = 2$, $k=1$ and $\alpha \in [0, 1]$. An interesting open problem is to characterize the values of $k$ (potentially as a function of $m$) for which the $k$-approval voting game admits pure Nash equilibria. 

The following theorem, derives a nontrivial lower bound on $\beta$ for the $k$-approval voting game. 

\begin{theorem}\label{thm:boundary:voting}
For every integers $k < m$, if $\STUBB > 1/2$ then $\BOU \geq \left(\frac{2\STUBB-1}{2\STUBB}\right)^2$ for the $k$-approval voting game.% with local aggregation.
\end{theorem}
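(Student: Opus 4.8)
The plan is to prove the per-agent bound $\beta_i \ge \left(\frac{2\alpha-1}{2\alpha}\right)^2$ for every agent $i$ and then take the minimum over $i$. Fix an equilibrium $\vec{e}$, an agent $i$, and a deviation $x \in Z$ with $x \ne \vec{e}(i)$, and abbreviate $e = \vec{e}(i)$, $s = s_i \in U$, and $a = \aggr_i(\vec{e}_{-i}) \in Z$. Since $d = L_2^2$ is only a $2$-approximate metric and $s$ may lie outside $Z$, Theorem~\ref{thm:boundary:Umetric} does not apply directly. The governing idea is instead to exploit the Euclidean/squared structure together with the fact that all elements of $Z$ are $0/1$ vectors with exactly $k$ ones, so they all have squared norm $k$.

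First I would write the best-response condition $c_i(\vec{e}) \le c_i([\vec{e}_{-i}, x])$ as $\alpha\,d(e,s) + (1-\alpha)\,d(e,a) \le \alpha\,d(x,s) + (1-\alpha)\,d(x,a)$ and expand each squared distance as $\|\cdot\|^2$. Because $\|e\|^2 = \|x\|^2 = \|a\|^2 = k$ and the $\|s\|^2$ terms are common to both sides, all quadratic terms cancel and the condition collapses to the linear inequality
\[ \alpha\,\langle x - e,\, s\rangle + (1-\alpha)\,\langle x - e,\, a\rangle \le 0. \]
This is the crucial simplification: it converts the squared best-response condition into a statement about inner products, which is what makes the $0/1$ structure usable and sidesteps the absence of a genuine triangle inequality for $L_2^2$.

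Next I would introduce the two ``flip sets'' $D^+ = \{\ell : x(\ell)=1,\, e(\ell)=0\}$ and $D^- = \{\ell: x(\ell)=0,\, e(\ell)=1\}$. Since $x$ and $e$ both have exactly $k$ ones, $|D^+| = |D^-| =: p \ge 1$, and $d(x,e) = \|x-e\|^2 = 2p$. Writing $\langle x-e,\, v\rangle = \sum_{\ell\in D^+}v(\ell) - \sum_{\ell\in D^-}v(\ell)$ for the above and using $a(\ell)\in[0,1]$ (so the $a$-contribution is at least $-p$), the linear inequality yields the scalar constraint
\[ \sum_{\ell\in D^+}s(\ell) - \sum_{\ell\in D^-}s(\ell) \le \tfrac{1-\alpha}{\alpha}\,p, \]
while discarding the nonnegative coordinates outside $D^+\cup D^-$ gives the lower bound $d(x,s) = \|x-s\|^2 \ge \sum_{\ell\in D^+}(1-s(\ell))^2 + \sum_{\ell\in D^-}s(\ell)^2$.

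Finally I would lower-bound $d(x,s)$ by minimizing $\sum_{D^+}(1-s(\ell))^2 + \sum_{D^-}s(\ell)^2$ subject to the scalar constraint and $s(\ell)\in[0,1]$. Substituting $t(\ell)=1-s(\ell)$ on $D^+$ turns this into minimizing a sum of $2p$ squares of variables in $[0,1]$ whose total is at least $p - \frac{1-\alpha}{\alpha}p = \frac{2\alpha-1}{\alpha}p$; by Cauchy--Schwarz this minimum equals $\frac{1}{2p}\big(\frac{2\alpha-1}{\alpha}p\big)^2 = \frac{(2\alpha-1)^2}{2\alpha^2}\,p$. Dividing by $d(x,e)=2p$ produces exactly $\left(\frac{2\alpha-1}{2\alpha}\right)^2$, and since $\vec{e}$, $i$ and $x$ were arbitrary, the theorem follows. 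I expect the main obstacle to be this reduction itself: recognizing that the correct way around the failure of the triangle inequality for $L_2^2$ is to pass to the inner-product form of the equilibrium condition and then solve a constrained least-squares problem, in which the constants conspire to meet the target bound exactly (the bound is in fact tight, e.g.\ already for $m=2$, $k=1$ with $a=e$).
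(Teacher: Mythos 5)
Your proof is correct, and its skeleton coincides with the paper's: both reduce the claim to minimizing $d(x,s_i)/d(x,\vec{e}(i))$ over $s_i$ subject to exactly the linear constraint \eqref{eq:man_constr2} extracted from the equilibrium condition (your $D^+$, $D^-$ are the paper's $C_{10}$, $C_{01}$ and your $p$ is its $\ell$). The execution differs in both sub-steps, and yours is the more elementary. For the constraint, the paper reuses inequality \eqref{lb1_g:br} from the proof of Theorem~\ref{thm:boundary:Umetric} and then invokes the triangle inequality for $d$ restricted to $Z$ (where $L_2^2$ coincides with Hamming distance); you instead expand all squared distances, cancel the common norms $\|x\|^2=\|e\|^2=\|a\|^2=k$, and bound $\langle x-e,a\rangle\geq -p$ coordinatewise --- the two routes are algebraically equivalent, but yours makes transparent why the failure of the triangle inequality on $U$ is harmless. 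For the optimization, the paper exhibits the explicit minimizer $\s^\ast_i$ of \eqref{eq:man_objective} under all constraints (including $\sum_j \s_i(j)=k$) and verifies the KKT conditions; you relax away both the simplex constraint and the coordinates outside the flip sets and finish with Cauchy--Schwarz. The relaxation is sound for a lower bound and considerably shorter, at the small price of not directly certifying tightness, which the paper gets for free from its feasible minimizer and which you correctly note holds anyway.
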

\begin{proof}
To obtain a lower bound on $\BOU_i$, we fix an equilibrium $\ee \in \E$ and a strategy $\x \in \Z$ with $\ee(i) \neq \x$, and find $\s^\ast_i = \arg\min_{\s_i \in U} \d(\x, \s_i) / \d(\x, \e_i)$. Then, we conclude that $\BOU_i \geq \d(\x, \s^\ast_i) / \d(\x, \ee(i))$. Throughout the proof, we let $[m] = \{ 1, \ldots, m\}$. Moreover, we let $x(j)$, $s_i(j)$ and $\ee(i)(j)$ denote the $j$-th coordinate (bit) of strategies $x$, $s_i$ and $\ee(i)$. 

To this end, for any fixed strategy $\x \in \Z$, let $C_0 = \{ j \in [m] : \x(j) = 0 \}$ and let $C_1 = \{ j \in [m] : \x(j) = 1 \}$ be indices $j$ for which $\x(j)$ is $0$, for $C_0$, and $1$, for $C_1$. We recall that $|C_1| = k$ and $|C_0| = m-k$. Moreover, for the equilibrium strategy $\ee(i) \in \Z$, we let $C_{01} = \{ j \in C_0 : \ee(i)(j) = 1 \}$ and $C_{10} = \{ j \in C_1 : \ee(i)(j) = 0 \}$ be indices $j$ where $\x(j) \neq \ee(i)(j)$. Since both $\x$ and $\ee(i)$ include exactly $k$ ones and $m-k$ zeros, $|C_{01}| = |C_{10}|$. In the following, we denote $|C_{01}| = |C_{10}| = \ell$, for simplicity, for some $1 \leq \ell \leq k$. Using this notation, we can write that $\d(\x, \ee(i)) = 2\ell$. 

Our goal is to find $\min_{\s_i \in U} \d(\x, \s_i) / (2\ell)$, for any fixed $\ell \in \{ 1, \ldots, k \}$, under the constraints that $\s_i \in U$, $\x \in Z$, and $\ee(i) \in Z$ is an equilibrium strategy. We first observe that 
\begin{equation}\label{eq:man_objective}
\frac{\d(\x, \s_i)}{2\ell} = \frac{1}{2\ell}\left(\sum_{j \in C_0} \s_i(j)^2 + \sum_{j \in C_1} (1-\s_i(j))^2 \right), 
\end{equation}
where 
\begin{equation}\label{eq:man_constr1}
\sum_{j \in C_0 \cup C_1} \s_i(j) = k \mbox{\ \ and \ \ } \s_i \in [0, 1]^m.
\end{equation}

Since $\ee$ is an equilibrium, and following the same reasoning as in the proof of Theorem~\ref{thm:boundary:Umetric}, we obtain that
\begin{align}
 &\quad \d(\ee(i), \s_i) - \d(\x, \s_i) \notag \\
 & \leq (1-\STUBB)\left( \d\big(\x, \aggregation_i(\ee_{-i})\big) - 
 \d\big(\ee(i), \aggregation_i(\ee_{-i})\big) \right) / \STUBB \notag \\
& \leq (1-\STUBB)\d(\x, \ee(i)) / \STUBB = 2\ell(1-\STUBB) / \STUBB\,, \label{eq:man_c1} 
\end{align}
where the first inequality follows from \eqref{lb1_g:br}, in the proof of Theorem~\ref{thm:boundary:Umetric}, and the second inequality follows from the triangle inequality, the fact that $\x, \ee(i), \aggregation_i(\ee_{-i}) \in \Z$, and the observation that the triangle inequality holds for $\d$ on $\Z$, since $L_2^2$ is identical to the Hamming distance on $Z$.

Since for any $j \not\in C_{10} \cup C_{01}$, $\x(j) = \ee(i)(j)$, we have that
\begin{align*}
  &\quad \d(\ee(i), \s_i) - \d(\x, \s_i) \\
  & = \sum_{j \in C_{10}} \left( \s_i(j)^2 - (1- \s_i(j))^2\right) \\
  & \ \ \ \ \ + \sum_{j \in C_{01}} \left( (1-\s_i(j))^2 - \s_i(j)^2\right) \\
 & = 2 \sum_{j \in C_{10}} \s_i(j) - 2 \sum_{j \in C_{01}} \s_i(j) \,,
\end{align*}
where for the last equality, we use the fact that $|C_{10}| = |C_{10}|$. Using \eqref{eq:man_c1}, we conclude that for any equilibrium strategy $\ee(i) \in Z$ and any strategy $\x \in Z$ with $d(\x, \ee(i)) = 2\ell$, $\s_i$ satisfies
\begin{equation}\label{eq:man_constr2}
  \sum_{j \in C_{10}} \s_i(j) - \sum_{j \in C_{01}} \s_i(j) \leq \ell (1-\STUBB)/ \STUBB.
\end{equation}

Therefore, we can find $\min_{\s_i \in S} \d(\x, \s_i) / (2\ell)$ by computing an $\s_i$ that minimizes~\eqref{eq:man_objective}, subject to~\eqref{eq:man_constr1}~and~\eqref{eq:man_constr2}. Since \eqref{eq:man_objective} is a strictly convex function of $\s_i(1), \ldots, \s_i(m)$ and the constraints of \eqref{eq:man_constr1} and \eqref{eq:man_constr2} are linear on $\s_i(1), \ldots, \s_i(m)$, we have to solve a strictly convex minimization problem with linear constraints. 

We observe that 
\[ \s^\ast_i(j) = \begin{cases}
             \displaystyle \frac{2\STUBB-1}{2\STUBB}\ \ \ \ \ &\mbox{if $j \in C_{01}$}\\ 
             \displaystyle \frac{1}{2\STUBB}          &\mbox{if $j \in C_{10}$}\\ 
             0                                 &\mbox{if $j \in C_0 \setminus C_{01}$}\\
             1                                 &\mbox{if $j \in C_1 \setminus C_{10}$} 
             \end{cases}
\]
is feasible (i.e., it lies in $\S$ and satisfies \eqref{eq:man_constr2}) and 
achieves an objective value of $\d(\x, \s^\ast_i) / (2\ell) = \left(\frac{2\STUBB-1}{2\STUBB}\right)^2$. Moreover, $\s^\ast_i$ is defined for any strategy $\x \in \Z$ and any equilibrium strategy $\ee(i) \in \Z$ with $\d(\x, \ee(i)) = 2\ell > 0$ and always attains the same objective value $\d(\x, \s^\ast_i) / (2\ell) = \left(\frac{2\STUBB-1}{2\STUBB}\right)^2$. 

To confirm that $\s^\ast_i$ is indeed a minimizer of the above strictly convex minimization problem, we observe that the KKT optimality conditions are satisfied at $\s^\ast_i$ with KKT multipliers $\mu = \frac{2\STUBB-1}{2\ell\STUBB}$ for \eqref{eq:man_constr2} and $0$ for all the constraints of \eqref{eq:man_constr1}. To verify this claim, we let 
\begin{align*}
 f(\y) &= \frac{1}{2\ell}\left(\sum_{j \in C_0} \y(j)^2 + \sum_{j \in C_1} (1-\y(j))^2 \right) \mbox{\ \ \ \ and}\\
 g(\y) &= \sum_{j \in C_{10}} \y(j) - \sum_{j \in C_{01}} \y(j) - \ell (1-\STUBB)/ \STUBB 
\end{align*}
We first observe that complementary slackness constraints are satisfied, since $g(\s^\ast_i) = 0$ and all other KKT multipliers are $0$. Primal and dual feasibility constraints are also satisfied. Regarding stationarity constraints, we observe that $\nabla_j f(\s^\ast_i) = 0$, for any $j \not\in C_{01} \cup C_{10}$, $\nabla_j f(\s^\ast_i) = \frac{2\STUBB-1}{2\ell\STUBB}$, for all $j \in C_{01}$, and $\nabla_j f(\s^\ast_i) = \frac{1-2\STUBB}{2\ell\STUBB}$, for all $j \in C_{10}$. Similarly, $\nabla_j g(\s^\ast_i) = 0$ for any $j \not\in C_{01} \cup C_{10}$, $\nabla_j g(\s^\ast_i) = -1$, for all $j \in C_{01}$, and $\nabla_j g(\s^\ast_i) = 1$, for all $j \in C_{10}$. Therefore, stationarity constraints are also satisfied with KKT multiplier $\mu = \frac{2\STUBB-1}{2\ell\STUBB}$.
\end{proof}

{
We note that the proof of Theorem~\ref{thm:boundary:voting} shows something stronger, namely that for any equilibrium strategy $\ee(i) \in \Z$ and any strategy $\x \in \Z$, with $\x \neq \ee(i)$, there is a preferred srtategy $\s^\ast_i \in S$, defined as in the proof of Theorem~\ref{thm:boundary:voting},
such that $\d(\x, \s^\ast_i) / d(\x, \ee(i)) = \left(\frac{2\STUBB-1}{2\STUBB}\right)^2$. Therefore, the proof of Theorem~\ref{thm:boundary:voting} shows that for any agent $i$, $\BOU_i = \left(\frac{2\STUBB-1}{2\STUBB}\right)^2$. 
}

Combined with Theorem~\ref{thm:poa:sum}, Theorem~\ref{thm:boundary:voting} implies that for opinion formation with binary preferences (where we do not have local aggregation and Lemma~\ref{lem:equilibrium} holds with $\tau = 1$, as an immediate consequence of the model defitinion in \cite{FerraioliGV16}), the price of anarchy is at most $\delta (2\alpha-1)^{-2}$, and at most $(2\alpha-1)^{-2}$ in the fully symmetric case where $\delta = 1$. Combining this bound with the fact that the price of anarchy is unbounded for all $\alpha \in [0, 1/2]$ \cite{FerraioliGV16}, we get the complete picture of the price of anarchy for opinion formation with binary preferences, for all $\alpha \in [0, 1]$.

%%%%%%%%%%%%%%%%%%%%%_BIB_%%%%%%%%%%%%%%%%%%%%%%%%%%
%\newpage

\bibliographystyle{alpha}
\bibliography{bib}

\end{document}